\def\BibTeX{{\rm B\kern-.05em{\sc i\kern-.025em b}\kern-.08em
    T\kern-.1667em\lower.7ex\hbox{E}\kern-.125emX}}
\def\Aoi{\Delta} 
\def\P{\mathbb{P}} 
\def\E{\mathbb{E}} 
\def\hatX{\hat{X}}
\def\hatP{\hat{P}}
\begin{document}

\title{Using Age of Information for Throughput Optimal Spectrum Sharing
}

\author{
Hongjae Nam,~\IEEEmembership{Student Member,~IEEE,}
Vishrant Tripathi,~\IEEEmembership{Member,~IEEE,} 
and David J. Love,~\IEEEmembership{Fellow,~IEEE}
\thanks{H. Nam, V. Tripathi, and D. J. Love are with the Elmore Family School of Electrical and Computer Engineering, Purdue University, West Lafayette, IN 47907 USA (e-mail: \{nam86,  tripathv, djlove\}@purdue.edu). This work was supported in part by the Office of Naval Research (ONR) under Grant N000142112472, and the National Science Foundation (NSF) under Grants CNS2235134, CNS2212565, CNS2225578, and EEC1941529. 
}
}


\maketitle

\begin{abstract}
We consider a spectrum sharing problem where two users attempt to communicate over $N$ channels. The Primary User (PU) has prioritized transmissions and its occupancy on each channel over time can be modeled as a Markov chain. The Secondary User (SU) needs to determine which channels are free at each time-slot and attempt opportunistic transmissions. The goal of the SU is to maximize its own throughput, while simultaneously minimizing collisions with the PU, and satisfying spectrum access constraints. To solve this problem, we first decouple the multiple-channel problem into $N$ single-channel problems. For each decoupled problem, we prove that there exists an optimal threshold policy that depends on the last observed PU occupancy and the freshness of this occupancy information. Second, we establish the indexability of the decoupled problems by analyzing the structure of the optimal threshold policy. Using this structure, we derive a Whittle index-based scheduling policy that allocates SU transmissions using the Age of Information (AoI) of accessed channels. We also extend our insights to PU occupancy models that are correlated across channels and incorporate learning of unknown Markov transition matrices into our policies. 
Finally, we provide detailed numerical simulations that demonstrate the performance gains of our approach.
\end{abstract}

\begin{IEEEkeywords}
Spectrum sharing, Age of Information, scheduling.
\end{IEEEkeywords}

\section{Introduction}
\label{sec:Intro}
\IEEEPARstart{S}{pectrum} sharing plays a fundamental role in modern wireless communication systems, particularly in the context of 6G networks, which aim to achieve ultra-reliable and low-latency communications (URLLC) while efficiently utilizing the radio spectrum. Effective spectrum sharing allows secondary users (SUs) - unlicensed or lower-priority users - to access available spectrum, enabling \textit{opportunistic spectrum} utilization without interfering with primary users (PUs) - licensed or higher-priority users.
This dynamic access approach maximizes spectral efficiency while supporting adaptive and intelligent resource allocation strategies \cite{mitola1999cognitive, ramanathan2005next, brinton2025key}.

\begin{figure}[ht] 
    \centering
    \includegraphics[width=0.9\linewidth]{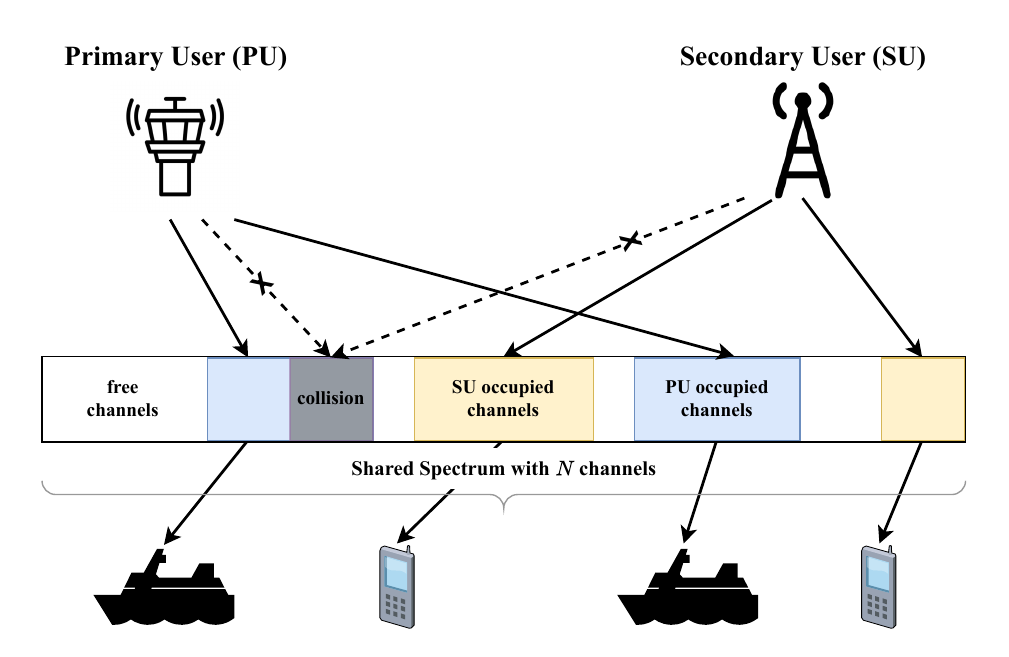}
    \caption{A spectrum sharing scenario with a primary user (PU) and a secondary user (SU) attempting transmissions over $N$ shared channels. The SU needs to carefully decide which channels to utilize so as to avoid collisions with the PU while still maximizing its own throughput.}
    \label{fig:Spectrum sharing scenario}     
\end{figure}

Figure \ref{fig:Spectrum sharing scenario} describes a typical spectrum sharing scenario. The Primary User (PU), such as a naval radar and communications sytem, transmits prioritized information. A Secondary User (SU) tries to utilize unused channels to send lower-priority information.  The Citizens Broadband Radio Service (CBRS), in the 3.55-3.7 GHz frequency range, serves as a representative example of such spectrum sharing. The band consists of three kinds of users: {1) incumbent users such as naval radars or satellite systems, 2) Priority Access License (PAL) users, typically 4G LTE and 5G operators, and 3) General Authorized Access (GAA) users. Incumbent and PAL users (PUs) are protected from interference from GAA users (SUs) through restrictions on simultaneous channel usage. In this tiered access system, the PU has the flexibility to select and utilize its preferred channels, and its spectrum usage pattern may vary dynamically over time. Consequently, the SU must continuously identify open channels for opportunistic spectrum access while minimizing interference with the PU's transmissions.

In this paper, we address how the SU should go about making spectrum access decisions in a dynamic setting while optimizing its own throughput and minimizing collisions with the PU, in the presence of spectrum access constraints. We start by modeling the PU's channel occupancies as independent Markov chains. In this simplified setting, we analyze structural properties of optimal spectrum access policies and design a low-complexity near-optimal scheme that utilizes both historical PU occupancy information and its freshness to decide which channels to utilize next. We then extend our results to more general channel occupancy models that are correlated across channels, and incorporate learning of unknown Markov transition matrices into our policies.


Optimal spectrum access policies, as we will show later, will depend heavily of the Age of Information (AoI) of PU channel occupancy observations. This is because of our Markov modeling assumptions. Recent observations are far more informative than older ones in making decisions about the future. AoI, a metric that quantifies the freshness of information, has gained significant attention in the context of wireless networks \cite{AoI+SS1_2019, AoI+SS2_2019, Tripathi_2014, tripathi2021computation, kadota2016minimizing, kadota2018scheduling}. The interplay between accurate monitoring of channels and throughput maximization leads to an interesting tradeoff: minimizing monitoring error (i.e., AoI) often necessitates transmissions across many channels, while optimizing throughput might mean utilizing the channels that are most likely to be immediately free. Choosing the right balance between exploration and exploitation is key to designing better spectrum access policies.



\paragraph*{\textbf{Related Work}} Spectrum sharing strategies have been extensively studied in the literature, with a focus on maximizing throughput while minimizing interference. Traditional approaches employ multi-armed bandit (MAB) models or a partially observed Markov decision process (POMDP)-based formulations to model the uncertain availability of channels \cite{zhao2007structure, zhao2008myopic, stahlbuhk2016throughput, stahlbuhk2020throughput}. These models have been widely adopted in scenarios where SUs must sequentially decide on sensing and access policies based on limited information. Notably, myopic sensing policies \cite{zhao2007structure, zhao2008myopic} have been shown to achieve near-optimal performance in a two-channel scenario and provided numerical results suggesting their efficiency in multi-channel settings. However, their theoretical optimality has only been proven for independent two-channel cases, and the analysis focuses solely on maximizing throughput without considering collision penalties or access constraints in the presence of hierarchical users. 

Queue-based models have also been employed in analyzing spectrum access and scheduling strategies. In \cite{stahlbuhk2016throughput, stahlbuhk2020throughput}, the authors investigate throughput maximization in uncooperative spectrum-sharing networks, modeling the interaction between primary and secondary users as a queuing system. 
However, their approach 
assumes that the  channel occupancies follow independent Bernoulli processes. This memoryless assumption overlooks temporal correlation in channel states and traffic patterns. Closer to our work, the authors in \cite{liu2010indexability, liu2024efficient} have applied the Whittle index to spectrum sharing. However, they consider discounted reward problems and provide approximations for the Whittle Index. In contrast, our simple Markov model allows us to explore the relationship between AoI and optimal spectrum sharing in much greater detail.


In addition to throughput/queue focused analyses, AoI has also emerged as a key metric for evaluating the freshness of channel state information (CSI) in time-varying channel environments \cite{Farazi_Klein_Brown_2023, farazi2017bounds, klein2017staleness}. The authors highlights that outdated or inaccurate CSI can negatively impact the quality of received signals in fully connected networks. Game-theoretic approaches provide another perspective on spectrum sharing by modeling interactions between SUs as strategic games \cite{zhang2015zero}. The application of zero-determinant (ZD) strategies in spectrum sharing has been explored in \cite{zhang2015zero}, demonstrating their effectiveness in enforcing cooperation and achieving socially optimal outcomes. Unlike traditional MAB-based techniques, ZD strategies enable an SU to exert unilateral control over payoff structures, making them particularly suitable for decentralized environments. Furthermore, recent works on O-RAN-based radar detection in the CBRS band \cite{reus2023senseoran} highlight the potential of leveraging AI-driven sensing techniques to enhance detection accuracy and responsiveness in shared spectrum environments.

\paragraph*{\textbf{Contributions}}
In this work, we propose a novel approach to spectrum sharing that integrates Markov channel occupancies with AoI, allowing us to develop throughput optimal spectrum access policies.
\begin{compactenum}    
    \item We formulate spectrum sharing as an optimization problem that maximizes SU throughput subject to collision penalties and spectrum access constraints.
    \item We decouple this optimization problem into $N$ single-channel problems through a Lagrangian relaxation and demonstrate that the optimal policy has a threshold structure on the AoI of channel access.
    \item We establish indexability of the original problem and use the optimal thresholds to develop the Whittle index policy, enabling a low-complexity method that achieves near-optimal performance. We also show that optimal spectrum sharing has close ties to AoI, since maintaining fresh information regarding channel states allows for better spectrum access.
    \item We further extend our approach to handle correlated PU channel occupancies by designing a heuristic index function inspired by the Whittle index. We also incorporate learning of unknown Markov transition matrices to our policies. We demonstrate the benefits of both the Whittle index as well as the heuristic index through detailed simulations. 
\end{compactenum}

\paragraph*{\textbf{Organization}} The remainder of this paper is organized as follows. In Section~\ref{sec:System model}, we describe the general system model with a probabilistic Markov chain channel framework and introduce the spectrum sharing problem. 
In Section~\ref{sec:Solving Decoupled Prob.}, we solve the resulting decoupled single-channel problem via Bellman equations and propose the optimal threshold type policy. In Section~\ref{sec:Scheduling Policies}, we derive a closed-form expression for the Whittle index using the optimal threshold policy and propose a Whittle index-based scheduling algorithm for the original multi-channel setting. Section~\ref{sec:Correlated Channel} extends our Whittle index approach to scenarios involving inter-channel correlation. In Section~\ref{sec:Numerical results}, we provide simulation results to validate our analysis, followed by the conclusions in Section \ref{sec:Conclusion}. 

\section{System Model} \label{sec:System model}
We consider a time-slotted network with two classes of users trying to transmit information over shared spectrum. The \textit{Primary User} (PU) gets priority when transmitting on any channel and does not have any spectrum access constraints. The \textit{Secondary User} (SU) needs to find channels on which the PU is not currently transmitting, so as to be able to communicate. The SU pays a cost for colliding with the PU or for occupying too many channels. The goal of the SU is to maximize its total time-average throughput, while ensuring that it does not collide with the PU, and satisfy its own energy/spectrum access constraints.

We assume that there are $N$ channels available in total. We model the PU occupancy on each channel through a symmetric two-state Markov chain as shown in Fig. \ref{fig:Markov_Channel}. If channel $i$ is currently occupied by the PU, then in the next time-slot, it transitions to free with probability $q_i$ and remains occupied with probability $1-q_i$. This allows us to model historical dependence of channel occupancy. Specifically, we assume that $q_i < 1/2, \forall i \in [N]$. Thus, if the PU is currently occupying a channel, it is more than likely to still be occupying the channel in the next time-slot. For analysis, we make the following key assumptions in our model:
\begin{assumption} \label{assm: Markov transition prob.}
The SU knows the Markov transition probabilities $\{q_i~|~i\in[N]\}$ for each channel $i$ in the spectrum and the corresponding transition matrices are given by
\begin{align} \label{eq:Transition matrix Q}
Q_i=\left[\begin{array}{cc}
1-q_i & q_i \\
q_i & 1-q_i 
\end{array}\right]
~,\forall i \in [N].
\end{align}
\end{assumption}

\begin{assumption} \label{assm: independence}
The PU channel occupancies, modeled by 2-state Markov chains, evolve \textbf{independently} across different channels.
\end{assumption}

These two assumptions will allow us to perform analysis, provide performance guarantees, and gain important insights into the process of spectrum sharing. Importantly, we will design spectrum access policies for the SU that utilize the knowledge of the transition matrices as well as PU channel occupancy information obtained in the past, to optimize the overall throughput of the SU. In Section \ref{sec:Correlated Channel}, we extend our Whittle index-based approach to scenarios where Assumption \ref{assm: independence} is relaxed, specifically when PU channel occupancies exhibit correlation across channels. In Section \ref{sec:Numerical results}, we also provide numerical results when Assumption \ref{assm: Markov transition prob.} is relaxed, i.e., the transition matrices $Q_i$ are unknown and need to be learned.

\begin{figure}[h] 
    \centering
    \includegraphics[width=0.6\linewidth]{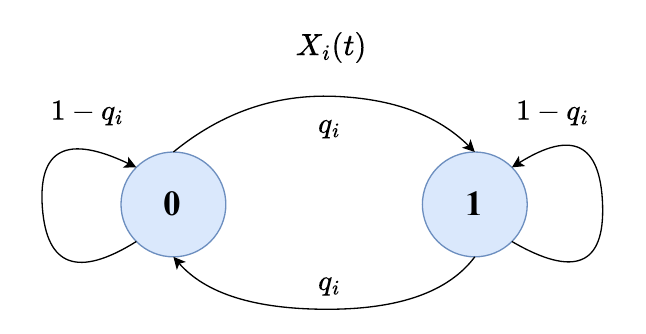}
    \caption{A two-state symmetric Markov chain that models the PU occupancy of the $i$-th channel.}
    \label{fig:Markov_Channel} 
\end{figure}

Let $\mathbf{X}(t) = [X_1(t),X_2(t),...,X_{N}(t)]^T \in \{0,1\}^N$ be a channel state vector representing PU occupancy at the time-slot $t$. For example, $X_i(t)=1$ indicates that the $i$-th channel is occupied by the PU at time-slot $t$, while $X_i(t)=0$ implies that it is vacant and available for the SU to use. To the SU, the $N$ channels appears as if they are independent Gilbert-Elliot channels \cite{gilbert1960capacity}, which are represented using two-state Markov chains to describe their state transitions over time. Under our Markov channel occupancy model, the fraction of time for which the PU occupies each channel converges to 50$\%$ due to the symmetric nature of the Markov transition matrix. The transition probability $q_i$ controls whether each period of PU channel occupancy is relatively short or long, influencing the rate at which PU occupancy switches.

\paragraph*{\textbf{Sensing and Collision Model}} In each time-slot, the SU can decide whether to transmit on channel $i$ or leave it empty. Let $u_i(t) \in \{0,1\}$ indicate whether the SU attempts a transmission on channel $i$ at time-slot $t$. Suppose the SU transmits on channel $i$ at time-slot $t$ ($u_i(t) = 1$). Then, if the channel is already occupied by the PU ($X_i(t)=1$), the SU transmission leads to a collision with the PU and fails. However, if the channel is unoccupied by the PU ($X_i(t) = 0$), the SU successfully transmit one packet. The failure or success of the transmission is revealed to the SU at the beginning of the next time-slot $t+1$ via an ACK. Thus, to sense PU channel occupancies, the SU must attempt regular transmissions on different channels. If the SU does not transmit on a channel for some time, its estimate of channel occupancy starts growing stale and less useful for making spectrum access decisions.



\paragraph*{\textbf{Measuring Throughput}} The SU successfully delivers one packet only when the PU is absent and no collision occurs. Mathematically, let $ R_i(t)$ indicate whether a packet was successfully transmitted by the SU over channel $i$ at time-slot $t$. Then, it is given by
\begin{equation} \label{eq:SU throughput}
    \begin{aligned}
        R_i(t) = u_i(t) (1-X_i(t)),~\forall i \in [N]. 
    \end{aligned}
\end{equation}
Clearly, the SU transmits a packet, i.e., $R_i(t)=1$, when it selects channel $i$ ($u_i(t)=1$) and the PU is absent on that channel ($X_i(t)=0$). Otherwise, the throughput is zero. 

\paragraph*{\textbf{Spectrum Access Constraints}} Due to practical hardware limitations and energy constraints, the SU will typically not be capable of simultaneously sensing the entire set of available channels in the spectrum. Further, even if it were capable of doing so, attempting transmission on all channels will cause significant disruption to the PU. So, we assume that the SU can access at most $L$ channels in each time-slot. Mathematically, we can write this down as a set of constraints on the spectrum access variables $u_i(t)$ as follows:
\begin{equation} \label{eq:Spectrum access constraints}
    \begin{aligned}
        \sum_{i=1}^{N} u_i(t) \leq L, \forall t \in [T],  
    \end{aligned} 
\end{equation}
where $L$ is the maximum number of channels the SU can access at a given time. These spectrum access constraints ensures that the SU needs to make selective and efficient use of its available transmission opportunities. As a result, an efficient sensing strategy is required to intelligently select channels based on past PU occupancy observations.

\paragraph*{\textbf{Collision Penalties}} In our spectrum sharing scenario, the SU may unintentionally select channels that are already occupied by a PU, leading to interference with higher-priority PU transmissions. We explicitly incorporate the impact of collisions as penalty terms in the objective function. 
Mathematically, a collision occurs when the SU selects channel $i$ ($u_i(t)=1$) while the PU is also present on the same channel ($X_i(t)=1$). We use $C_i(t)$ to indicate the collision penalty incurred by the SU on channel $i$ at time-slot $t$.
\begin{equation} \label{eq:Collision penalties}
C_i(t) = \gamma u_i(t) X_i(t), \; \forall i \in [N],
\end{equation}
where $\gamma > 0$ represents the penalty cost for each individual collision. Imposing this penalty ensures that the SU's channel selection strategy considers the impact on both its own throughput and the PU's performance, thereby discouraging channel access that results in harmful interference.

Given the setup described above, we aim to formulate a unified optimization framework that simultaneously accounts for both spectrum access constraints and collision penalties. This framework enables the SU to make optimal channel selections by jointly considering limited access opportunities and interference costs. Our goal is to design a causal spectrum access policy $\pi$ to maximize the total SU throughput under limited spectrum access constraints and collision penalties.

\begin{mdframed}[linewidth=1pt, linecolor=black]
\textbf{Problem 1} \textit{(SU Throughput Maximization under Spectrum Access Constraints and Collision Penalties)}

Given Markov transition probabilities $\{q_i~|~ i \in [N]\}$, a spectrum usage constraint of at most $L$ channels at each time-slot, and a collision penalty of $\gamma>0$ per collision, we want to maximize the long term expected SU throughput minus the collision penalties.
\begin{equation} \label{eq:P1, Unified}
\begin{aligned}
    \max_{\pi} \quad & \lim \limits_{T \to \infty} \mathbb{E}_{\pi} \left[  \frac{1}{L} \frac{1}{T} \sum_{i=1}^{N} \sum_{t=1}^{T} \Big(R_{i}(t)- C_i(t)\Big) \right] \\
   \textrm{s.t.}     
   \quad & \sum_{i=1}^{N} u_i(t) \leq L, \forall t \in [T],\\
   \quad & u_i(t) \in \{0,1\}, \forall i \in [N], \forall t \in [T]. \\   
\end{aligned}    
\tag{P1} 
\end{equation}
\end{mdframed}

Solving \eqref{eq:P1, Unified} directly is challenging due to the spectrum access constraint in each time-slot. As our system model satisfies the Markov property, \eqref{eq:P1, Unified} can be formulated as a restless multi-armed bandit (RMAB) problem with state transitions governed by the PU and the SU. Since the RMAB problem has been proven to be \textit{PSPACE-hard} \cite{papadimitriou1999complexity}, finding an optimal solution to \eqref{eq:P1, Unified} is computationally intractable. This is because our action space in each time-slot is combinatorial over the number of channels and the number of spectrum access policies scales exponentially in the length of the time-horizon, making dynamic programming (DP) based solutions infeasible. To manage this complexity, we first relax the per-time-slot constraint into a time-average constraint. We then use a Lagrangian relaxation \cite{kriouile2024asymptotically} to decouple \eqref{eq:P1, Unified} into $N$ single-channel problems, each of which is much easier to solve optimally using dynamic programming. Building on this optimal policy for a single channel, we propose a low-complexity near-optimal scheduling policy using the Whittle index approach to tackle the original multi-channel problem \eqref{eq:P1, Unified}.

To decouple the problem across channels, we introduce a Lagrangian multiplier $C>0$, which acts as the price for accessing a channel. Normalizing this price defines an effective transmission cost $D=(\gamma+C)/(1+\gamma)$, which leads to the following result.

\begin{mdframed}[linewidth=1pt, linecolor=black]
\begin{thm} \label{thm: Decoupled Singe-Channel Problem}
The spectrum sharing problem \eqref{eq:P1, Unified} decouples in to $N$ single-channel problems of the form below, where $D>0$ is an effective transmission cost.
\begin{equation} \label{eq:P1-d}
\begin{aligned}
    \max_{\pi} \quad & \lim \limits_{T \to \infty} \mathbb{E}_{\pi} \left[  \frac{1}{T}   \sum_{t=1}^{T} \Big(R_{i}(t) - D u_i(t)\Big) \right] \\   
   \textrm{s.t.}     
   \quad & u_i(t) \in \{0,1\}, \forall t \in [T]. \\  
\end{aligned}    
\tag{P1-d}
\end{equation}
\end{thm}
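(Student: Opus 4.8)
The plan is to follow the standard Lagrangian relaxation route used for restless bandits. First I would relax the hard per-slot budget $\sum_{i=1}^N u_i(t)\le L$ in \eqref{eq:P1, Unified} into the weaker time-average form $\lim_{T\to\infty}\mathbb{E}_\pi\big[\frac{1}{L}\frac{1}{T}\sum_{t=1}^T\sum_{i=1}^N u_i(t)\big]\le 1$. Any policy feasible for the per-slot constraint is feasible for the averaged one, so this enlarges the policy class, and it is precisely what enables the decoupling.

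Next I would dualize the averaged constraint with a price $C>0$ for channel access, forming the Lagrangian
\begin{equation}
\mathcal{L}(\pi,C)=\lim_{T\to\infty}\mathbb{E}_\pi\!\left[\frac{1}{L}\frac{1}{T}\sum_{i=1}^N\sum_{t=1}^T\big(R_i(t)-C_i(t)\big)\right]-C\!\left(\lim_{T\to\infty}\mathbb{E}_\pi\!\left[\frac{1}{L}\frac{1}{T}\sum_{t=1}^T\sum_{i=1}^N u_i(t)\right]-1\right).
\end{equation}
Since the reward, the collision penalty, and the access cost are all additive across channels, and since the chains evolve independently by Assumption~\ref{assm: independence}, $\mathcal{L}$ separates as $\frac{1}{L}\sum_{i=1}^N\mathcal{L}_i(\pi_i,C)+C$, where $\mathcal{L}_i$ depends only on channel $i$'s state and action. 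Maximizing over $\pi$ therefore reduces to independently maximizing each per-channel functional $\mathcal{L}_i(\pi_i,C)=\lim_{T\to\infty}\mathbb{E}_{\pi_i}\big[\frac{1}{T}\sum_{t=1}^T\big(R_i(t)-C_i(t)-C u_i(t)\big)\big]$.

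The remaining step is the algebraic normalization that isolates $D$. Substituting $R_i(t)=u_i(t)(1-X_i(t))$ and $C_i(t)=\gamma u_i(t)X_i(t)$ gives $R_i(t)-C_i(t)-Cu_i(t)=u_i(t)\big(1-C-(1+\gamma)X_i(t)\big)$. Factoring out the positive constant $(1+\gamma)$, which does not change the maximizer, rewrites the per-slot integrand as $(1+\gamma)\,u_i(t)\big(1-D-X_i(t)\big)$ with $D=(\gamma+C)/(1+\gamma)$; and since $u_i(t)\big(1-D-X_i(t)\big)=R_i(t)-Du_i(t)$, the per-channel problem is exactly \eqref{eq:P1-d}.

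I expect the main obstacle to be conceptual rather than computational: justifying that the per-slot-to-time-average relaxation combined with Lagrangian duality yields a legitimate decoupling, i.e., that additivity of the objective together with the independence in Assumption~\ref{assm: independence} lets each channel be governed by its own policy, and that the inert scaling factors (the $1/L$, the prefactor $1+\gamma$, and the additive constant $C$) may be discarded without altering the optimal access decisions. Once that is in place, the bookkeeping that pins $D$ down to $(\gamma+C)/(1+\gamma)$ is routine.
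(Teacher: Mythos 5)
Your proposal is correct and follows essentially the same route as the paper's own proof: relax the per-slot budget to a time-average constraint, dualize it with a price $C$, use additivity and Assumption~\ref{assm: independence} to separate the Lagrangian across channels, and then perform the normalization by $1+\gamma$ that yields $D=(\gamma+C)/(1+\gamma)$ and the form \eqref{eq:P1-d}. Your algebra $R_i(t)-C_i(t)-Cu_i(t)=u_i(t)\big(1-C-(1+\gamma)X_i(t)\big)=(1+\gamma)\big(R_i(t)-Du_i(t)\big)$ is just a re-grouped version of the paper's identity $R_i(t)-C_i(t)-Cu_i(t)=(1+\gamma)R_i(t)-(\gamma+C)u_i(t)$, so the two arguments coincide.
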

\end{mdframed}

\begin{proof}
By relaxing the per-time-slot access constraints in \eqref{eq:P1, Unified} to a time-average constraint, we apply Lagrangian relaxation with the multiplier $C$. This transformation converts the global constraint into a transmission cost, thereby enabling decoupling across channels. For more details, see Appendix A. 
\end{proof}

Importantly, note that the effective transmission cost $D=(\gamma+C)/(1+\gamma)$ is an increasing function of both the collision penalty $\gamma$ and the transmission cost $C$, which aligns with intuition. When $\gamma=0$, the effective cost reduces to $D=C$, indicating that only the Lagrange transmission cost remains. In contrast, as $\gamma \to \infty$, the effective cost $D$ approaches $1$, since we assume that the SU throughput \eqref{eq:SU throughput} on a single channel is measured as a one-packet success, an infinite collision penalty effectively nullifies the reward by imposing a penalty of 1. By obtaining the optimal policy for each decoupled single-channel problem with an effective transmission cost \eqref{eq:P1-d}, we can optimally solve the overall Lagrangian relaxation. As we will discuss later, this  will also allow us design a spectrum access policy for the original problem \eqref{eq:P1, Unified} that is nearly optimal via the Whittle index framework. 

\section{Solving the Decoupled Problem} \label{sec:Solving Decoupled Prob.}

In this section, we will solve the decoupled single-channel problem \eqref{eq:P1-d} using dynamic programming. The first observation we make is that the decoupled problem can be modeled as a Markov decision process (MDP) with an infinite horizon time-average reward. To describe this single-channel MDP, we need to describe four things - the state space, the action space, the rewards, and the Markov transitions. We describe each of these below.

\paragraph*{\textbf{States}} Let us suppose that at time-slot $t$, the SU last accessed the channel at time-slot $t-\Aoi_i(t)$. Then, it has two pieces of information - 1) the last observed PU occupancy, i.e., $X_i(t-\Aoi_i(t))$ and 2) the staleness of the PU channel occupancy information at the SU, i.e., $\Aoi_i(t)$. We will call $\Aoi_i(t)$ the Age of Information (AoI) of channel $i$. Since we assume the transition probability $q_i$ to be less than 1/2, the best estimate of PU occupancy at time-slot $t$ is simply the last observed channel occupancy:
$
\hatX_i(t) = X_i(t-\Aoi_i(t)).
$

Given this estimate $\hatX_i(t)$, the AoI $\Aoi_i(t)$, and the PU occupancy transition matrix $Q_i$, we can easily compute $\mathbb{P}\big(X_i(t)| \hatX_i(t)\big)$. Thus, the state of our MDP will be the following two-dimensional tuple $\big\{\hatX_i(t), \Aoi_i(t)\big\}.$
Here, $\hatX_i(t) \in \{0,1\}$ while $\Aoi_i(t) \in \mathbb{Z}^{+}$.

\paragraph*{\textbf{Actions}} In each time-slot $t$, the SU has only one decision to make - whether to access channel $i$ (set $u_i(t) = 1$) and pay an effective transmission cost $D$ in the hope of throughput reward, or not transmit (set $u_i(t) = 0$).

\paragraph*{\textbf{Rewards}} The reward in our MDP is straightforward. In each time-slot, the reward for channel $i$ equals throughput minus an effective transmission cost, i.e., $R_i(t) - D u_i(t)$. More importantly, we are interested in time-average reward, rather than discounted reward, which will influence the structure of the Bellman equations. 

\paragraph*{\textbf{Transitions}} Given our states, actions, and rewards, the following lemma describes how the MDP evolves over time.
\begin{lem} \label{lem:MDP Evolution}
    Given the state $\{\hatX_i(t), \Aoi_i(t)\}$ at time-slot $t$, the actual PU occupancy at time-slot $t$ is distributed as follows
    \begin{align}
    \P\big(X_i(t) = \hatX_i(t) \big)= \left[Q_i^{\Aoi_i(t)}\right]_{00},\\
    \P\big(X_i(t) \neq \hatX_i(t) \big) = \left[Q_i^{\Aoi_i(t)}\right]_{10}.
    \end{align}
    Here $[Q_i^\Aoi]_{mn}$ represents the $(m,n)$-the entry of $Q_i^\Aoi$, which is the $\Aoi$-th power of the matrix $Q_i$.
    Using this, we can now write down state transitions as a function of the channel access decision $u_i(t)$.
    \begin{align} \label{eq:Aoi update rule}
\Aoi_i(t+1) = 
    \begin{cases}
    \Aoi_i(t)+1, &\text{if } u_i(t)=0 \\
    1, &\text{if } u_i(t)=1.
    \end{cases}
\end{align}
\begin{align} \label{eq:estimate update rule}
\hatX_i(t+1) = 
    \begin{cases}
    \hatX_i(t), &\text{if } u_i(t)=0 \\
    X_i(t), &\text{if } u_i(t)=1.
    \end{cases}
\end{align}
\end{lem}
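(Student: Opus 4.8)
The plan is to split the lemma into two independent parts: the conditional distribution of the current occupancy $X_i(t)$ given the state $\{\hatX_i(t), \Aoi_i(t)\}$, and the deterministic update rules for the two state components. The distributional part rests on the Markov property combined with the symmetry of $Q_i$, whereas the update rules follow directly from the definition of AoI and the sensing model.

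For the distribution, I would begin by recalling that, by construction, $\hatX_i(t) = X_i(t-\Aoi_i(t))$: the state component $\hatX_i(t)$ is exactly the PU occupancy observed at the most recent access, which took place $\Aoi_i(t)$ slots earlier. The key structural fact is that the PU occupancy process $\{X_i(t)\}$ is an autonomous Markov chain with transition matrix $Q_i$ whose dynamics do not depend on the SU's access decisions $u_i(\cdot)$. Conditioning on the full SU observation/action history up to slot $t$ therefore collapses, by the Markov property, to conditioning only on the last observed value $X_i(t-\Aoi_i(t))$. Applying the Chapman--Kolmogorov relation across the $\Aoi_i(t)$ elapsed slots, the conditional law of $X_i(t)$ given $\hatX_i(t)$ is precisely the row of $Q_i^{\Aoi_i(t)}$ indexed by $\hatX_i(t)$.

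I would then use symmetry to merge the two cases $\hatX_i(t)=0$ and $\hatX_i(t)=1$. Since $Q_i^T = Q_i$, every power satisfies $(Q_i^{\Aoi})^T = (Q_i^T)^{\Aoi} = Q_i^{\Aoi}$, so $Q_i^{\Aoi}$ is symmetric; in particular $[Q_i^{\Aoi}]_{00}=[Q_i^{\Aoi}]_{11}$ and $[Q_i^{\Aoi}]_{01}=[Q_i^{\Aoi}]_{10}$. Hence the probability that the occupancy still matches the last observation, $\P(X_i(t)=\hatX_i(t))$, equals $[Q_i^{\Aoi_i(t)}]_{00}$ irrespective of whether $\hatX_i(t)$ is $0$ or $1$, and the mismatch probability $\P(X_i(t)\neq\hatX_i(t))$ equals $[Q_i^{\Aoi_i(t)}]_{10}$. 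These are the two displayed identities.

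The update rules are then immediate. If the SU skips the channel ($u_i(t)=0$), it gathers no new information, so the most recent access time is unchanged and the age increments, $\Aoi_i(t+1)=\Aoi_i(t)+1$, while the retained estimate gives $\hatX_i(t+1)=\hatX_i(t)$. If instead the SU accesses the channel ($u_i(t)=1$), it observes $X_i(t)$ via the ACK received at the start of slot $t+1$, so the freshest observation is one slot old, yielding $\Aoi_i(t+1)=1$ and $\hatX_i(t+1)=X_i(t)$. I expect the only real subtlety to be the justification that conditioning on the AoI together with the SU's possibly history-dependent access pattern does not bias the elapsed Markov transitions; this is exactly where the independence of the PU dynamics from the SU's actions must be invoked. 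Once that is in hand, the symmetry step and the update rules are routine.
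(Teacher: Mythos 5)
Your proposal is correct and follows essentially the same route as the paper's Appendix B: use the fact that the PU chain evolves autonomously (unaffected by SU decisions), apply the Chapman--Kolmogorov relation over the $\Aoi_i(t)$ elapsed slots to get the multi-step transition law from $\hatX_i(t)=X_i(t-\Aoi_i(t))$, and then invoke symmetry to collapse the cases $\hatX_i(t)=0$ and $\hatX_i(t)=1$ into the two displayed identities; the update rules are definitional in both treatments. One repair is needed in your symmetry step: transpose symmetry $(Q_i^{\Aoi})^T=Q_i^{\Aoi}$ gives only the off-diagonal equality $[Q_i^{\Aoi}]_{01}=[Q_i^{\Aoi}]_{10}$, and the diagonal equality $[Q_i^{\Aoi}]_{00}=[Q_i^{\Aoi}]_{11}$ does \emph{not} follow from matrix symmetry alone (any symmetric matrix with distinct diagonal entries is a counterexample). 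It does hold here because $Q_i^{\Aoi}$ is also stochastic, so
\begin{equation*}
[Q_i^{\Aoi}]_{00}=1-[Q_i^{\Aoi}]_{01}=1-[Q_i^{\Aoi}]_{10}=[Q_i^{\Aoi}]_{11};
\end{equation*}
equivalently, write $Q_i=(1-q_i)I+q_iJ$ with $J$ the state-swap matrix, so every power is of the form $aI+bJ$ and automatically has equal diagonal and equal off-diagonal entries. With that one-line fix your argument is complete; your explicit justification that conditioning on the SU's history-dependent access pattern does not bias the elapsed transitions is a point the paper leaves implicit, and it is worth stating.
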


\begin{proof}
    The proof follows by expanding the multi-step Markov  transition probabilities and then leveraging the symmetry of the matrix $Q_i$. For more details, see Appendix B. 
\end{proof}

As a consequence of Lemma~\ref{lem:MDP Evolution}, the estimated PU channel occupancy at any time can be computed as just a function of the last known occupancy, the transition matrix $Q_i$ and the AoI. 
Given the MDP described above, we will now describe the Bellman equations for our decoupled single-channel problem. 
The Bellman equations characterize the set of conditions that an optimal policy must satisfy \cite{bertsekas2012dynamic}. 

Since the analytical approach is the same across all $N$ instances of the decoupled problem \eqref{eq:P1-d}, we will omit the channel identifier $i$ from our analysis below. To set up the Bellman equations, we introduce some more notation. Let $\lambda$ denote the time-average reward under the optimal policy. Further, let $S(\hatX(t),\Aoi(t))$ represent the differential value function, which measures the relative value of being in state $(\hatX(t),\Aoi(t))$ compared to this time-average reward. 

\begin{lem} \label{lem:Bellman equations}
The Bellman Equations for the decoupled single-channel problem are given by the following two equations.
\begin{align}\label{eq:Bellman Case 1}
\begin{aligned}
S(1,\Aoi) = \max\limits_{u \in \{0,1\}} \bigg\{& S(1,\Aoi+1), \big[Q^\Aoi\big]_{10}(S(0,1)+1)
\\ + \big[Q^\Aoi\big]_{00}&S(1,1) - D \bigg\}-\lambda, ~\forall \Aoi \in \mathbb{Z^+}.
\end{aligned}
\end{align}

\begin{align}\label{eq:Bellman Case 2}
\begin{aligned}
S(0,\Delta) = \max\limits_{u \in \{0,1\}} \bigg\{& S(0,\Aoi+1), \big[Q^\Aoi\big]_{00}(S(0,1)+1) \\+ \big[Q^\Aoi\big]_{10}&S(1,1) -D\bigg\} - \lambda , ~\forall \Aoi \in \mathbb{Z^+}.
\end{aligned}
\end{align}
\end{lem}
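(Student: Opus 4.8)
The plan is to derive both equations from the standard average-reward optimality equation for a unichain MDP, which states that the optimal gain $\lambda$ and a differential value function $S$ jointly satisfy $S(s) + \lambda = \max_{u \in \{0,1\}}\{ r(s,u) + \E[S(s') \mid s,u] \}$ for every state $s$. Given this template, proving Lemma~\ref{lem:Bellman equations} reduces to computing, for each of the two states $(\hatX,\Aoi)$ and each action, the expected one-slot reward $r(s,u)$ and the expected continuation value $\E[S(s')\mid s,u]$, then substituting and simplifying. All the transition data I need is supplied by Lemma~\ref{lem:MDP Evolution}, and the reward in each slot is $R(t) - D\,u(t)$ with $R(t)$ given by \eqref{eq:SU throughput}.

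First I would handle the idle action $u=0$. Here the reward is identically zero, and by \eqref{eq:Aoi update rule}--\eqref{eq:estimate update rule} the state moves deterministically from $(\hatX,\Aoi)$ to $(\hatX,\Aoi+1)$, so the corresponding branch of the maximum is simply $S(\hatX,\Aoi+1)$. Next I would treat the access action $u=1$ starting from $\hatX = 1$. Lemma~\ref{lem:MDP Evolution} gives $\P(X = 1) = [Q^\Aoi]_{00}$ and $\P(X = 0) = [Q^\Aoi]_{10}$, so the expected reward is $[Q^\Aoi]_{10}(1 - D) + [Q^\Aoi]_{00}(-D)$, which collapses to $[Q^\Aoi]_{10} - D$ because the cost $D$ is paid regardless of the outcome and the two occupancy probabilities, being those of complementary events, sum to one. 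After accessing, the AoI resets to $1$ and the estimate becomes the freshly observed occupancy, so the continuation value is $[Q^\Aoi]_{10}\,S(0,1) + [Q^\Aoi]_{00}\,S(1,1)$. Adding reward and continuation value yields $[Q^\Aoi]_{10}(S(0,1)+1) + [Q^\Aoi]_{00}\,S(1,1) - D$, which is exactly the active branch of \eqref{eq:Bellman Case 1}.

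The case $\hatX = 0$ is symmetric: the symmetry of $Q$ ensures that $\P(X = \hatX) = [Q^\Aoi]_{00}$ irrespective of the value of $\hatX$, so repeating the same computation with the roles of the $00$ and $10$ entries interchanged produces the active branch of \eqref{eq:Bellman Case 2}. Combining the idle and active branches inside the maximum and moving $\lambda$ to the right-hand side then gives both stated equations.

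The genuinely delicate step is not the algebra but justifying the average-reward optimality equation itself, since the AoI component makes the state space countably infinite and finite-state existence results do not apply directly. I would establish existence of a constant optimal gain $\lambda$ together with a differential value function $S$ of controlled growth via a vanishing-discount argument, or by verifying that the induced chain is communicating: from any state the always-access policy returns the AoI to $1$ in a single slot, which supplies a recurrent reference state and rules out multiple recurrent classes, so a single gain $\lambda$ is well defined. This is where I expect the main technical work to lie; once the optimality equation is available, the state-by-state reduction above is routine.
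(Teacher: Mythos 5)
Your proof is correct and takes essentially the same route as the paper's Appendix C: both write the average-reward optimality equation $S(s)+\lambda=\max_u\{r(s,u)+\E[S(s')\,|\,s,u]\}$, evaluate the idle branch as the deterministic move to $(\hatX,\Aoi+1)$, and use Lemma~\ref{lem:MDP Evolution} to compute the expected reward and continuation value of the transmit branch, case by case in $\hatX$ (the paper writes the coefficients as $[Q^\Aoi]_{11}$ and $[Q^\Aoi]_{01}$, which equal $[Q^\Aoi]_{00}$ and $[Q^\Aoi]_{10}$ by symmetry of $Q$). Your closing discussion of why a constant gain $\lambda$ and differential value function $S$ exist on the countably infinite state space (vanishing discount, or the always-access policy making AoI $=1$ a recurrent reference state) addresses a point the paper silently assumes, so it is a strengthening rather than a divergence.
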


\begin{proof}
Combining Lemma \ref{lem:MDP Evolution} with evaluation of the Bellman recursion under the two possible PU occupancy states $X(t) \in \{0,1\}$ leads to the two equations above. For more details, see Appendix C. 
\end{proof}

An optimal policy must satisfy the system of equations \eqref{eq:Bellman Case 1} and \eqref{eq:Bellman Case 2} above. Before we find such a policy, we will look at policies with a threshold type structure. 
\begin{defn}
    A \textbf{threshold policy} with  the threshold $H$ for the decoupled single-channel  problem computes the spectrum access decisions for the SU via the following mapping.
    \begin{equation}
    \label{eq:threshold-policies}
        u(t) = \begin{cases}
        1, \text{ if } ~\hatX(t)= 0\\
        1, \text{ if } ~\hatX(t) = 1 \text{ and }\Aoi(t) \geq H\\
        0, \text{ otherwise}.
        \end{cases}
    \end{equation}
\end{defn}

Intuitively, a threshold policy of the form above allows the SU to keep utilizing the channel when it is  unoccupied by the PU ($\hatX(t)=0)$. As soon as it realizes that the PU is actually occupying the channel $(\hatX(t)=1)$, it stops transmitting and then waits for $H$ time-slots before trying a re-transmission. 
The following lemma characterizes the time-average reward for such threshold policies.
\begin{lem} \label{lem:Average Reward λ}
The time-average reward under the threshold policy is a function of the threshold $H$ and the effective transmission cost $D$.
\begin{align} \label{eq:Time-Average Reward λ}
\lambda(H,D)= \frac{\left[Q^{H}\right]_{10}-(\left[Q^{H}\right]_{10}+\left[ Q \right]_{10})D}{\left[ Q^{H} \right]_{10}+H\left[ Q \right]_{10}}.
\end{align}
\end{lem}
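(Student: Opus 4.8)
The plan is to observe that, once the threshold policy \eqref{eq:threshold-policies} is fixed, the state process $\{(\hatX(t),\Aoi(t))\}$ reduces to a \emph{finite} Markov chain whose time-average reward I can read off from its stationary distribution. The key structural point is that the policy resets the AoI to $1$ whenever $\hatX=0$ and forces a transmission the instant $\Aoi=H$ whenever $\hatX=1$. Hence the only recurrent states are the single ``free'' state $(0,1)$ together with the $H$ ``occupied-and-waiting'' states $(1,1),(1,2),\dots,(1,H)$; every $(0,\Aoi)$ with $\Aoi>1$ and every $(1,\Aoi)$ with $\Aoi>H$ is simply never visited. I would therefore restrict to this $(H+1)$-state chain, note that it is finite and irreducible (each edge below has positive probability because $0<q<1/2$), and invoke the ergodic theorem to write $\lambda(H,D)=\sum_s \mu(s)\,r(s)$, where $\mu$ is the stationary distribution and $r(s)=\E[R(t)-Du(t)\mid s]$ the expected one-slot reward.

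Next I would pin down the transitions using Lemma~\ref{lem:MDP Evolution}. In $(0,1)$ the SU transmits; by symmetry of $Q$ the observation flips with probability $[Q]_{10}$, sending the chain to $(1,1)$, and stays at $(0,1)$ otherwise. Each waiting state advances deterministically, $(1,k)\to(1,k+1)$ for $k<H$, since the policy is silent there. At $(1,H)$ the SU transmits and, again by Lemma~\ref{lem:MDP Evolution}, observes a free channel with probability $[Q^{H}]_{10}$ (returning to $(0,1)$) or a collision with probability $[Q^{H}]_{00}=1-[Q^{H}]_{10}$ (restarting at $(1,1)$). The balance equations then collapse: the deterministic marching forces $\mu(1,1)=\cdots=\mu(1,H)=:\pi$, while balance at $(0,1)$ gives $[Q]_{10}\,\mu(0,1)=[Q^{H}]_{10}\,\pi$. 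Normalizing via $\mu(0,1)+H\pi=1$ yields $\pi=[Q]_{10}\big/\big([Q^{H}]_{10}+H[Q]_{10}\big)$ and $\mu(0,1)=[Q^{H}]_{10}\big/\big([Q^{H}]_{10}+H[Q]_{10}\big)$.

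Finally I would evaluate the per-state rewards and assemble $\lambda$. The waiting states $(1,k)$ with $k<H$ contribute nothing since $u=0$ there. In $(0,1)$ the SU transmits and succeeds with probability $[Q]_{00}=1-[Q]_{10}$, so $r(0,1)=(1-[Q]_{10})-D$; at $(1,H)$ it transmits and succeeds with probability $[Q^{H}]_{10}$, so $r(1,H)=[Q^{H}]_{10}-D$. Weighting by $\mu$ gives $\lambda=\mu(0,1)\big((1-[Q]_{10})-D\big)+\pi\big([Q^{H}]_{10}-D\big)$. Substituting the stationary probabilities produces the denominator $[Q^{H}]_{10}+H[Q]_{10}$, and in the numerator the cross terms $-[Q^{H}]_{10}[Q]_{10}$ and $+[Q]_{10}[Q^{H}]_{10}$ cancel, leaving $[Q^{H}]_{10}-\big([Q^{H}]_{10}+[Q]_{10}\big)D$, which is exactly \eqref{eq:Time-Average Reward λ}.

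I do not anticipate a deep obstacle; the real care is in (i) correctly isolating the finite recurrent class so the nominally infinite AoI state space causes no trouble, and (ii) the bookkeeping in the final cancellation. An equivalent and arguably cleaner route is a renewal--reward argument over the regeneration cycle that begins each time the chain enters $(1,1)$: computing the expected cycle reward and the expected cycle length reproduces the same numerator and denominator (up to a common factor of $[Q]_{10}$) and sidesteps solving for $\mu$ explicitly. The one hypothesis I would flag throughout is $0<[Q]_{10}=q<1/2$, which both justifies taking $\hatX$ equal to the last observation and guarantees that the finite chain is genuinely irreducible so that the ergodic identity applies.
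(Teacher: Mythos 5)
Your proof is correct, but it takes a genuinely different route from the paper. The paper works entirely inside the Bellman framework: it fixes $S(1,1)=0$, solves the recursion $S(1,\Aoi)=S(1,\Aoi+1)-\lambda$ for $\Aoi<H$ together with $S(1,\Aoi)=[Q^{\Aoi}]_{10}(S+1)-D-\lambda$ for $\Aoi\ge H$, obtains $H\lambda=[Q^{H}]_{10}(S+1)-D$, then eliminates $S=S(0,1)$ using the Bellman equation at $(0,1)$ and solves for $\lambda$. You instead observe that the fixed threshold policy induces a finite irreducible Markov chain on the recurrent class $\{(0,1),(1,1),\dots,(1,H)\}$, compute its stationary distribution ($\mu(1,k)\equiv\pi$ by the deterministic marching, $[Q]_{10}\,\mu(0,1)=[Q^{H}]_{10}\,\pi$ by balance at $(0,1)$), and average the one-slot rewards; your transition probabilities, per-state rewards, and the final cancellation all check out, and the renewal--reward variant you sketch is equally sound. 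Your approach is the more elementary and self-contained way to get $\lambda(H,D)$ alone: it sidesteps the differential value function entirely and handles the nominally infinite state space cleanly by isolating the recurrent class. What the paper's route buys in exchange is the byproducts: the closed forms for $S(1,\Aoi)$ and $S(0,1)$ (equations \eqref{eq:Aoi ≥ H} and \eqref{eq:S(0,Aoi)}) are exactly what Appendix E needs to verify optimality of the threshold via $g(H)=\big([Q^{H+1}]_{10}-[Q^{H}]_{10}\big)(S+1)$, and what Appendix F needs to establish indexability and the Whittle index. So if one adopted your proof of this lemma, the value-function computations would still have to be done later anyway; within the paper's overall architecture the Bellman derivation is pulling double duty. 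One minor remark: irreducibility of your finite chain only needs $0<q<1$; the hypothesis $q<1/2$ is what justifies taking $\hatX$ as the last observation and makes the threshold structure meaningful, so your flagging of it is apt but slightly overloaded.
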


\begin{proof}
    To prove this result, we solve the Bellman equations from Lemma \ref{lem:Bellman equations} under a threshold policy with the parameter $H$. For more details, see Appendix D. 
\end{proof}



The structure of the threshold policies is motivated by our Markov channel occupancy assumption. If the SU attempted a transmission in the last time-slot and succeeded, we know that the PU had left the channel unoccupied. Since $q < 1/2$, the channel is still more likely to be unoccupied in the next time-slot, and our certainty about the channel being empty only decreases with time. So, the SU should keep transmitting until at some point in the future, when the PU comes back to this channel and causes a collision. At that point, we expect that at least for the next few time-slots, the PU will continue occupying the channel. So, the SU should wait for some threshold $H$ before it tries re-transmission. It turns out that simply choosing the best possible threshold over this simple class of policies leads to a policy that satisfies the Bellman equations and solves \eqref{eq:P1-d} optimally. 

\vspace{2mm}
\begin{mdframed}
[linewidth=1pt,linecolor=black]
\begin{thm} \label{thm:Optimal threshold H}
Given $q \in (0, 1/2)$, a threshold policy of the form \eqref{eq:threshold-policies} with threshold $H^*$ optimally solves the decoupled single-channel problem and satisfies the Bellman equations \eqref{eq:Bellman Case 1} and \eqref{eq:Bellman Case 2}. The optimal threshold $H^*$ satisfies
\begin{align} \label{eq:Optimal threshold condition}
&g(H^*) \leq \lambda \leq g(H^*-1),
\end{align}
where $g(H)\triangleq([Q^{H+1}]_{10}-[Q^H]_{10})\big(S(0,1)+1\big)$. If no such $H^*$ can be found, the optimal policy for the SU is to never transmit.
\end{thm}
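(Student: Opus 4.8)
The plan is to proceed by \emph{guess-and-verify}: posit that the optimizer of \eqref{eq:P1-d} has the threshold form \eqref{eq:threshold-policies}, solve the Bellman equations \eqref{eq:Bellman Case 1}--\eqref{eq:Bellman Case 2} under this policy, and then check that the resulting differential value function satisfies the full optimality equations, so that optimality holds against \emph{all} policies and not merely threshold ones. Since the differential value function is determined only up to an additive constant, I first normalize $S(1,1)=0$; this is exactly the normalization under which the stated $g(H)$ takes its given form. Under a threshold-$H$ policy the only recurrent states are $(0,1)$ and $(1,1),\dots,(1,H)$, because the AoI is reset whenever $\hatX=0$ and is allowed to grow from $1$ to $H$ while $\hatX=1$. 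Solving the two active Bellman equations on these states — the $\hatX=0$ transmit equation and the $\hatX=1$ transmit equation at $\Delta=H$, tied together by the wait recursion $S(1,\Delta)=S(1,\Delta+1)-\lambda$ — gives a closed pair of linear equations for $\lambda$ and $S(0,1)$ and recovers $\lambda(H,D)$ from Lemma~\ref{lem:Average Reward λ}.

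The structural engine of the argument is the monotonicity of the transmit value. Diagonalizing the symmetric matrix $Q$ yields $[Q^\Delta]_{10}=\tfrac{1}{2}\big(1-(1-2q)^\Delta\big)$, which, since $0<1-2q<1$, is strictly increasing and strictly concave in $\Delta$ and converges to $\tfrac12$. Writing the transmit value in state $(1,\Delta)$ as $V(\Delta)=[Q^\Delta]_{10}\big(S(0,1)+1\big)-D$ (using $S(1,1)=0$ and $[Q^\Delta]_{00}=1-[Q^\Delta]_{10}$), its first difference is precisely $g(\Delta)=q(1-2q)^\Delta\big(S(0,1)+1\big)$. Provided $S(0,1)+1>0$ — which holds because $(0,1)$ is the most favorable state, giving $S(0,1)\geq S(1,1)=0$ — the increments $g(\Delta)$ are positive and strictly decreasing, so $V$ is increasing and concave. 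This is the single-crossing property that converts a local optimality check into a global threshold: transmitting only becomes more attractive relative to waiting as $\Delta$ grows.

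The verification then collapses to two boundary comparisons. At $\Delta=H^*$ the Bellman maximum must pick transmit, i.e.\ $V(H^*)\geq S(1,H^*+1)=V(H^*+1)-\lambda$, which rearranges to $\lambda\geq g(H^*)$; at $\Delta=H^*-1$ it must pick wait, i.e.\ $S(1,H^*)=V(H^*)-\lambda\geq V(H^*-1)$, which rearranges to $\lambda\leq g(H^*-1)$. Together these are exactly \eqref{eq:Optimal threshold condition}. Monotonicity of $g$ propagates each inequality to every remaining state: for $\Delta\geq H^*$ one has $g(\Delta)\leq g(H^*)\leq\lambda$ so transmit stays optimal, and for $\Delta<H^*$ a short induction using $S(1,\Delta)=(\Delta-1)\lambda$ and $\lambda\leq g(H^*-1)\leq g(\Delta-1)$ shows waiting stays optimal. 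The $\hatX=0$ states are checked separately: the transmit-value increment there is $([Q^{\Delta+1}]_{00}-[Q^\Delta]_{00})\big(S(0,1)+1\big)=-q(1-2q)^\Delta\big(S(0,1)+1\big)\leq 0$, so the transmit-optimality inequality holds automatically whenever $\lambda\geq 0$, confirming the SU should always transmit when it believes the channel is free. Finally, if no integer $H^*\geq 1$ satisfies \eqref{eq:Optimal threshold condition} — equivalently $\max_H\lambda(H,D)\leq 0$, which occurs when $D$ is so large that even the maximally stale, near-$50/50$ channel is not worth accessing — then never transmitting attains $\lambda=0$ and is optimal, yielding the stated fallback.

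The step I expect to be the main obstacle is the self-referential nature of \eqref{eq:Optimal threshold condition}: $g$ depends on $S(0,1)$, yet $S(0,1)$ and $\lambda$ are themselves functions of $H$ through the coupled linear system above. I would resolve this by noting that for each fixed candidate policy $S(0,1)$ is a constant, so $g$ is genuinely a decreasing function of its integer argument and the sandwich selects at most one $H^*$; existence and uniqueness of the optimizer then follow from establishing that $\lambda(H,D)$ is unimodal in $H$, which is itself a consequence of the decreasing-increments property of $[Q^\Delta]_{10}$. The remaining loose ends — verifying $S(0,1)+1>0$ rigorously rather than through the heuristic $S(0,1)\geq S(1,1)$, and confirming $\lambda\geq 0$ at the optimum — both follow from comparing the candidate threshold policy against the never-transmit baseline.
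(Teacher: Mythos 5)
Correct, and essentially the same approach as the paper's Appendix E: both arguments solve the Bellman equations under a threshold policy, exploit the fact that the increments $g(\Delta) = \big([Q^{\Delta+1}]_{10}-[Q^{\Delta}]_{10}\big)\big(S(0,1)+1\big) = q(1-2q)^{\Delta}\big(S(0,1)+1\big)$ are positive and strictly decreasing (concavity of $[Q^{\Delta}]_{10}$ in $\Delta$ together with $S(0,1)+1>0$), and read off the sandwich $g(H^*)\leq\lambda\leq g(H^*-1)$ from the transmit/wait comparisons at $\Delta=H^*$ and $\Delta=H^*-1$. Your write-up is in fact slightly more thorough than the paper's, since you also explicitly verify the $\hat{X}=0$ Bellman equation (using $\lambda\geq 0$) and propagate the boundary inequalities to all remaining states via monotonicity of $g$, steps the paper leaves implicit.
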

\end{mdframed}

\begin{proof}
     To prove this result, we show that under the condition \eqref{eq:Optimal threshold condition}, the Bellman equations from Lemma \ref{lem:Bellman equations} are satisfied. This requires us to leverage the concavity of the term $[Q^H]_{10}$ as a function of the threshold $H$. We also show that at least one such optimal threshold $H^*$ always exists, although it does not need to be finite to derive a necessary condition for optimality involving $g(H)$. For more details, see Appendix E.
\end{proof}

A key insight is that proving the monotonicity of $g(H)$ is essential for ensuring the existence and uniqueness of the optimal threshold $H^*$, which in turn defines a stationary threshold structure. This means that the decision to transmit is based solely on the last observed PU occupancy $\hatX$ and the AoI $\Aoi$, providing a simple yet optimal strategy. Furthermore, if $H^*$ satisfies \eqref{eq:Optimal threshold condition}, then this threshold policy's differential value function $S(\hatX,\Aoi)$ satisfies the optimal Bellman recursion. Accordingly, the threshold policy with $H^*$ is optimal over all possible policies.
It turns out that, $g(H)$ is an increasing and concave function for all $q\in(0,1/2)$. Therefore, the optimal threshold $H^*$ can be efficiently found using the bisection method as stated in Theorem \ref{thm:Optimal threshold H}. 

\begin{figure}[h] 
     \centering
     \includegraphics[width=0.8\linewidth]{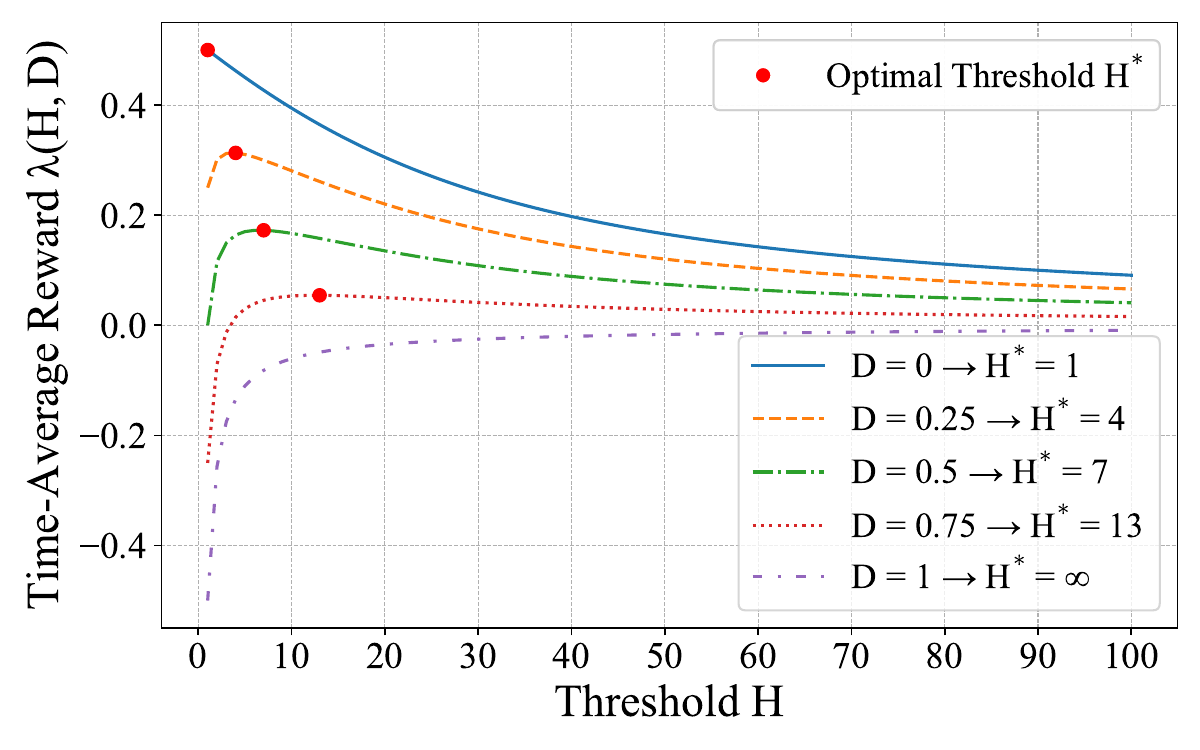}
         \vspace{-3mm}
     \caption{Time-average reward $\lambda(\Aoi,D)$ vs AoI $\Aoi$ with varying transmission cost $D=\{0, 0.25, 0.5, 0.75, 1\}$ and fixed transition probability $q=0.05$, highlighting the optimal threshold $H^*$.}
     \label{fig:λ(H,D) vs H}     
         \vspace{-3mm}
\end{figure}

Fig. \ref{fig:λ(H,D) vs H} illustrates the time-average reward $\lambda(H,D)$ as a function of the threshold $H$ for various transmission costs $D$. The transition probability is fixed at $q=0.05$. As the effective transmissions cost $D$ increases, the optimal threshold $H^*$ also increases. This means that the SU should wait longer before trying a re-transmission, as frequent attempts lead to more collisions. When $D=0$, the optimal threshold is $H^*=1$, indicating that updates occur as frequently as possible since there is no penalty for transmitting. Conversely, when $D=1$, the optimal threshold becomes $H^*=\infty$, which implies that the SU never tries to transmit, as the effective transmission cost always cancels out immediate reward gained from successfully transmitting a single packet.

\section{Spectrum Access Policies} \label{sec:Scheduling Policies}
In the previous section, we established that to solve the decoupled single-channel problem, an optimal policy has the threshold type structure as described by \eqref{eq:threshold-policies}. This solution can be applied to each channel $i$ to compute an optimal threshold $H^{*}_i$, thus solving the Lagrangian relaxed problem. However, it still does not provide us a solution for the original problem \eqref{eq:P1, Unified} with strict spectrum access constraints. To do so, we will introduce the Whittle index approach.



The foundation of index-based decision-making in stochastic control traces back to the ``Gittins Index'' which was introduced by John C. Gittins in 1979 to solve the multi-armed bandit (MAB) problem and dynamic allocation indices \cite{gittins1979bandit}. The Gittins Index provides an near-optimal solution for classical multi-armed bandits with reward function and probability of termination, offering an efficient way to prioritize arms without solving the full DP problem. 
Building on this concept, Peter Whittle extended the notion to restless multi-armed bandits (RMABs) in 1988,  where the states of non-played arms can also evolve over time \cite{whittle1988restless}. In his seminal work, he introduced the ``Whittle Index'' which allows for tractable approximations of the optimal scheduling policy in resource allocation problems. Note that the ``restless'' part is crucial to modeling our spectrum access problem, as the PU occupancy is dynamic, i.e., restless.

\subsection{Whittle Index Policy} \label{subsec:Whittle Index}
To apply the Whittle index approach to our problem and find an optimal policy for $\eqref{eq:P1, Unified}$, we first need to verify a key structural property known as \textit{indexability} for the decoupled problems, which ensures that a well-defined index can be assigned to each state. 


\begin{defn} \textbf{Indexability} \cite{whittle1988restless}.
Given cost $D$, let $\mathscr{P}(D)$ be the set of states for which the optimal action in the decoupled problem is not to transmit. The problem is \textit{indexable} if the set $\mathscr{P}(D)$ monotonically increases from the empty set to the entire state space, as $D$ increases from $0$ to $\infty$.    
\end{defn}

\begin{defn}  \label{def: Whittle Index Policy}
\textbf{Whittle index} \cite{whittle1988restless}. Consider the decoupled problem and denote by $W(\hatX, \Aoi)$ the Whittle index in state $(\hatX, \Aoi)$. Given indexability, $W(\hatX, \Aoi)$ is the minimum cost $D$ that makes both actions ($u=1$ and $u=0$) equally desirable in state $(\hatX, \Aoi)$. 
\end{defn}

The following lemma establishes that indexability holds for the decoupled single-channel problems.
\begin{lem} \label{lem:Indexability}
The \textit{indexability} property holds for each decoupled single-channel problem \eqref{eq:P1-d}, given $q_i \in (0, 1/2)$.
\end{lem}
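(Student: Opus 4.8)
The plan is to leverage the threshold characterization from Theorem~\ref{thm:Optimal threshold H}, which collapses indexability into a single monotonicity statement about the optimal threshold. First I would observe that, whenever a finite optimal threshold $H^*(D)$ exists, the optimal action is ``not transmit'' precisely on the states $(1,\Aoi)$ with $\Aoi < H^*(D)$, while every $\hatX=0$ state and every $(1,\Aoi)$ with $\Aoi \geq H^*(D)$ is active. Hence the passive set is
\[
\mathscr{P}(D) = \{(1,\Aoi) : 1 \leq \Aoi < H^*(D)\},
\]
and the indexability requirement reduces to showing that $H^*(D)$ is non-decreasing in $D$ (so that once a state is passive it stays passive), together with the boundary behaviours: $H^*(0)=1$ gives $\mathscr{P}(0)=\emptyset$, and the never-transmit regime at large $D$ gives $\mathscr{P}(D)$ equal to the entire state space.

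The core step is the monotonicity of $H^*(D)=\arg\max_{H}\lambda(H,D)$. For this I would use the closed form in Lemma~\ref{lem:Average Reward λ} and establish that $\lambda(H,D)$ has increasing differences in $(H,D)$, i.e.\ that $\lambda(H+1,D)-\lambda(H,D)$ is non-decreasing in $D$; by standard monotone comparative statics this forces the maximizer to be non-decreasing in $D$. Differentiating \eqref{eq:Time-Average Reward λ} in $D$ at fixed $H$ yields
\[
\frac{\partial \lambda}{\partial D}(H,D) = -\,\frac{[Q^H]_{10}+[Q]_{10}}{[Q^H]_{10}+H[Q]_{10}},
\]
so increasing differences is equivalent to this quantity being non-decreasing in $H$, i.e.\ to the ratio $\big([Q^H]_{10}+[Q]_{10}\big)\big/\big([Q^H]_{10}+H[Q]_{10}\big)$ being non-increasing in $H$. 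Writing $[Q^H]_{10}=\tfrac12\big(1-(1-2q)^H\big)$, which is increasing and concave in $H$ for $q\in(0,1/2)$, I would verify the ratio is decreasing: the numerator saturates to $\tfrac12+q$ while the denominator grows essentially linearly in $H$. This is the step I expect to be the main obstacle, since it requires a careful (though elementary) comparison exploiting the concavity of $[Q^H]_{10}$ rather than a one-line monotonicity argument.

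Finally I would handle the $\hatX=0$ states and the never-transmit regime. Because the threshold policy is active on every $\hatX=0$ state for all $D$ that admit a (finite or infinite) threshold, these states enter $\mathscr{P}(D)$ only when $D$ crosses into the never-transmit regime, where they enter simultaneously; I would check that this regime is upward-closed in $D$ (a larger cost can only make the all-passive policy more attractive), so no state ever leaves $\mathscr{P}(D)$. Combining the non-decreasing $H^*(D)$ with this, $\mathscr{P}(D)$ grows monotonically from $\emptyset$ at $D=0$ to the entire state space as $D\to\infty$, which is exactly indexability.
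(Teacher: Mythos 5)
Your proposal is correct, but it takes a genuinely different route from the paper's proof. The paper does not use monotone comparative statics: it combines the threshold-optimality conditions of Theorem~\ref{thm:Optimal threshold H} with the closed form of $\lambda(H,D)$ from Lemma~\ref{lem:Average Reward λ} to \emph{invert} the relationship between cost and threshold, obtaining explicit indifference costs $W(1,H)$ with $W(1,H^*-1)\le D\le W(1,H^*)$, and then proves that $H\mapsto W(1,H)$ is monotone increasing, so these intervals tile the cost axis and $H^*(D)$ is non-decreasing. You instead apply Topkis-style comparative statics directly to $\lambda(H,D)$: since $\lambda$ is affine in $D$, your increasing-differences condition is exactly that $B(H)=\bigl([Q^H]_{10}+[Q]_{10}\bigr)/\bigl([Q^H]_{10}+H[Q]_{10}\bigr)$ be non-increasing in $H$, and your partial derivative computation is right. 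One caution on the step you flag as the main obstacle: the heuristic ``numerator saturates, denominator grows linearly'' only yields \emph{eventual} decrease of $B(H)$, not decrease at every $H$; to close it, note that (after cross-multiplying positive denominators) $B(H+1)\le B(H)$ is equivalent to $(H-1)[Q^{H+1}]_{10}\le H[Q^H]_{10}+[Q]_{10}$, which follows because $[Q^H]_{10}$ is concave in $H$ with $[Q^0]_{10}=0$, hence $[Q^H]_{10}/H$ is non-increasing and $(H-1)[Q^{H+1}]_{10}\le \frac{(H^2-1)}{H}[Q^H]_{10}\le H[Q^H]_{10}$. As for what each route buys: the paper's inversion produces the Whittle index formula as a byproduct (Lemma~\ref{lem:Indexability} and Theorem~\ref{thm:Whittle Index} share a single appendix proof), whereas your argument establishes indexability more modularly and with a standard lattice-theoretic tool, but leaves the index expression to be derived separately. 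Your handling of the $\hatX=0$ states (passive only in the never-transmit regime, which is upward-closed in $D$ by increasing differences against $H=\infty$) is at least as careful as the paper's, which simply declares transmission always optimal there and sets $W(0,\Aoi)=\infty$.
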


\begin{proof}
    Since the optimal threshold $H^*(D)$ increases monotonically with the effective transmission cost $D$, 
    the decoupled problem is indexable. For more details, see Appendix F. 
\end{proof}

Since the \textit{indexability} condition holds for each decoupled single-channel problem, we can proceed to compute the corresponding Whittle index. The Whittle index is derived by identifying the cost at which both actions ($u=1$ and $u=0$) are equally desirable for a given state. The following theorem provides a closed-form expression for the Whittle index for spectrum sharing.
\begin{mdframed}
[linewidth=1pt,linecolor=black]
\begin{thm} \label{thm:Whittle Index}
For the $i$-th decoupled problem \eqref{eq:P1-d}, the Whittle index $W_i(\hatX_i, \Aoi_i)$ is given by \\
$W_i(\hatX_i, \Aoi_i)$
\begin{align} \label{eq:Whittle index}
\triangleq \begin{cases}
    \infty, \hspace{56.5mm} \text{if } \hatX_i = 0\\
    \frac{\Aoi_i([Q_i^{\Aoi_i-1}]_{10}-[Q_i^{\Aoi_i}]_{10})+[Q_i^{\Aoi_i}]_{10}}{(\Aoi_i-1)([Q_i^{\Aoi_i-1}]_{10}-[Q_i^{\Aoi_i}]_{10})+[Q_i^{\Aoi_i}]_{10}+[Q_i]_{10}}, \enspace \text{if } \hatX_i=1.
\end{cases}
\end{align}
\end{thm}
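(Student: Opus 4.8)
The plan is to compute the Whittle index directly from Definition~\ref{def: Whittle Index Policy}, as the transmission cost $D$ at which, in a given state, transmitting ($u=1$) and waiting ($u=0$) become equally desirable, exploiting the fact that the optimal policy is always of threshold type (Theorem~\ref{thm:Optimal threshold H}) together with the indexability from Lemma~\ref{lem:Indexability}. Indexability guarantees that as $D$ grows the optimal threshold $H^*(D)$ grows monotonically, so each state $(\hatX=1,\Aoi)$ is active for small $D$ and passive for large $D$, crossing the boundary at a unique cost; this crossing cost is precisely its Whittle index.

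First I would dispose of the case $\hatX=0$. Under the optimal threshold policy, a channel just observed free is always accessed (the first line of \eqref{eq:threshold-policies} carries no AoI condition). Since $q<1/2$, a just-observed free channel remains more likely free than occupied, so transmitting weakly dominates waiting throughout the regime in which the SU transmits at all: as $D$ increases, $H^*(D)$ excludes the $\hatX=1$ states one by one, while the $\hatX=0$ states stay active until the policy degenerates to ``never transmit.'' No finite crossing cost isolates them as the marginal state, so they retain the maximal index and are assigned $W_i=\infty$, which also gives known-free channels top scheduling priority.

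The substantive case is $\hatX=1$. The key reduction is that indifference between $u=1$ and $u=0$ in state $(\hatX=1,\Aoi)$ is equivalent to indifference between the two threshold policies that straddle this state, i.e.\ the cost $D$ at which $H^*(D)$ transitions across $\Aoi$. Because Lemma~\ref{lem:Average Reward λ} gives the time-average reward of a threshold-$H$ policy in closed form as $\lambda(H,D)$, I would locate this crossing by equating the rewards of the two consecutive thresholds, $\lambda(H,D)=\lambda(H+1,D)$, at the boundary $H$ corresponding to AoI $\Aoi$. Both sides are affine in $D$ with denominators $[Q^H]_{10}+H[Q]_{10}$, so clearing denominators yields a single linear equation in $D$, whose solution is
\begin{align*}
D=\frac{H\big([Q^{H}]_{10}-[Q^{H+1}]_{10}\big)+[Q^{H}]_{10}}{[Q^{H+1}]_{10}+[Q]_{10}+H\big([Q^{H}]_{10}-[Q^{H+1}]_{10}\big)},
\end{align*}
which, after aligning the threshold index with the AoI convention of \eqref{eq:threshold-policies}, matches the stated expression for $W_i(\hatX_i{=}1,\Aoi_i)$.

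I expect two main obstacles. The first is justifying rigorously that Q-function indifference in state $(\hatX=1,\Aoi)$ coincides with equality of the two adjacent-threshold average rewards; this is where threshold optimality (Theorem~\ref{thm:Optimal threshold H}) and the monotonicity of $H^*(D)$ from indexability (Lemma~\ref{lem:Indexability}) must be invoked carefully, including tracking the off-by-one between the rule ``transmit at AoI $\ge H$'' and the boundary AoI. The second is the algebraic simplification: the cross-multiplied equation contains cross terms such as $[Q^{H}]_{10}[Q]_{10}$ and $\Aoi\,[Q^{H+1}]_{10}[Q]_{10}$, and reaching the compact form above requires cancelling the $[Q^{H}]_{10}[Q^{H+1}]_{10}$ terms and regrouping by the increment $[Q^{H}]_{10}-[Q^{H+1}]_{10}$. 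Finally, well-definedness (positivity of the denominator and finiteness of the index for $\hatX=1$) follows from $q\in(0,1/2)$, which keeps $[Q^{H}]_{10}$ strictly increasing and concave in $H$.
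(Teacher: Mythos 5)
Your computational route differs from the paper's. The paper (Appendix F) substitutes the closed-form $\lambda(H^*,D)$ from Lemma~\ref{lem:Average Reward λ} into the two Bellman optimality inequalities of Theorem~\ref{thm:Optimal threshold H} and reads the index off as an endpoint of the interval of costs for which threshold $H^*$ is optimal; you instead characterize the crossing cost by equating the average rewards of adjacent threshold policies, $\lambda(H,D)=\lambda(H+1,D)$. That characterization is legitimate (at the index of the marginal state both straddling thresholds are optimal, hence earn the same average reward), and your algebra is correct: solving the linear equation does give $D^{*}(H)=\frac{H([Q^{H}]_{10}-[Q^{H+1}]_{10})+[Q^{H}]_{10}}{H([Q^{H}]_{10}-[Q^{H+1}]_{10})+[Q^{H+1}]_{10}+[Q]_{10}}$.

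The genuine gap is the final ``alignment'' step, which you flag as an obstacle and then assert away. Thresholds $H$ and $H+1$ prescribe different actions at exactly one state, namely $(\hatX,\Aoi)=(1,H)$; hence the crossing cost $D^{*}(H)$ is, by Definition~\ref{def: Whittle Index Policy}, the index of state $(1,H)$ --- the correct substitution is $\Aoi=H$, not $\Aoi=H+1$. With $\Aoi=H$, your expression involves the powers $\Aoi$ and $\Aoi+1$, i.e., it equals the theorem's right-hand side evaluated at $\Aoi+1$ rather than at $\Aoi$. This is a substantive off-by-one, not a notational convention: at $\Aoi=1$ the stated formula gives $W_i(1,1)=0$ (since $[Q^{0}]_{10}=0$), whereas the cost that makes both actions equally desirable in state $(1,1)$ is strictly positive --- for $q=0.4$ one checks $\lambda(1,D)=\lambda(2,D)$ exactly at $D=0.4$, where the Bellman indifference $S(1,2)=[Q]_{10}\big(S(0,1)+1\big)-D$ indeed holds. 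So, carried out consistently, your derivation yields the stated formula with $\Aoi$ shifted by one, and therefore does not prove the theorem as written. The tension traces back to the paper's own Appendix F: solving its second inequality for $D$ requires dividing by a negative coefficient, so it actually yields the lower bound $D\geq W(1,H^*)$ rather than the claimed upper bound $D\leq W(1,H^*)$; with this corrected, the optimality interval of threshold $H^*$ is $[W(1,H^*),\,W(1,H^*+1)]$, and the cost at which both actions are equally desirable in state $(1,\Aoi)$ is $W(1,\Aoi+1)$ --- precisely your formula. To complete your proof you must either rigorously justify the identification $\Aoi=H+1$ (which the convention \eqref{eq:threshold-policies} does not support) or explicitly flag that your derivation and the stated expression disagree by one unit of AoI.
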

\end{mdframed}

\begin{proof}
    Using the indexability established in Lemma \ref{lem:Indexability}, the Whittle index is defined as the cost that makes both actions equally desirable at a given state. So to find the index, the transmission cost $D$ needs to be set such that this condition is satisfied. For more details, see Appendix F. 
\end{proof}

At a high level, the Whittle index of a channel at state $(\hatX_i(t), \Aoi_i(t))$ measures how valuable it is to schedule a transmission on that channel, given that the last observed PU occupancy was $\hatX_i(t)$ and this observation was made $\Aoi_i(t)$ time-slots ago. Clearly, if we know that the PU is not occupying the channel ($\hatX_i(t)=0$), we should attempt a transmission, which suggests the infinite Whittle index value. However, if we last observed the channel to be occupied by the PU ($\hatX_i(t)=1$), the Whittle index is low for small values of AoI and gradually increases as $\Aoi_i(t)$ increases, since it becomes more and more likely that the channel occupancy flipped. Using \eqref{eq:Whittle index}, we can now design the Whittle index policy in Algorithm~\ref{alg:Whittle Index + Threshold Policy} to address the original SU throughput maximization problem under spectrum access constraints and collision penalties \eqref{eq:P1, Unified}. 


\begin{algorithm}[h!]     
\caption{Whittle Index for Spectrum Sharing}  \label{alg:Whittle Index + Threshold Policy}  

\KwIn{maximum number of SU channels $L$, collision penalty $\gamma$, and time horizon $T$.}
\KwOut{Spectrum access policy $u_i(t), \forall i \in [N], \forall t \in [T].$}

\BlankLine
\textbf{Initialization:} \\

$\Aoi_{i}(0) \gets 1,~\forall i \in [N]$ 

$\hatX_i(0) \gets 1, \forall i \in [N]$ 

\BlankLine
\For{$t = 0$ \KwTo $T$}{
    Sort channels by Whittle index $W_i(\hatX_i(t), \Aoi_i(t))$

    $\text{For the top } L \text{ channels: }$ 
    
    $u_i(t) \gets   \begin{cases} 
        1, \text{if } \Aoi_i(t) \geq H^*_i(q_i,\frac{\gamma}{1+\gamma})\\
        0, \text{otherwise}
    \end{cases}$

    $\text{For the rest: } u_i(t) \gets 0$

    \For{$i = 1$ \KwTo $N$}{
    $\Aoi_i(t+1) \gets \begin{cases}
        \Aoi_i(t)+1, & \text{ if }u_i(t)=0\\
        1,& \text{ if }u_i(t)=1
    \end{cases}$
    
    $\hatX_i(t+1) \gets \begin{cases}
        \hatX_i(t), & \text{ if }u_i(t) = 0\\
        X_i(t), & \text{ if }u_i(t)=1
    \end{cases}$
    }     
}
\end{algorithm}
Intuitively, the algorithm prioritizes channels with a higher likelihood of being free immediately and for longer periods by assigning them larger Whittle index values, which reflect the minimum cost (or ``price'') an SU is willing to pay, indicating the perceived worth of channel access. 
The algorithm selects the top $L$ channels based on the Whittle Index and then chooses those which exceed their optimal threshold under the collision penalty. 
The AoI and last observed PU occupancy states are then updated accordingly for each channel based on the transmission decision. For tie-breaking among channels with the same Whittle index values, we prioritize ones with the lower AoI, since it indicates more certainty about channel occupancy information.

\section{Correlated Channel Occupancies}\label{sec:Correlated Channel}
A drawback of our analysis and the Whittle index approach is that it does not immediately extend to settings where the PU occupancy is correlated across channels. A typical example would be the case where the PU uses a contiguous band of $B$ channels, and the center of this band moves randomly across the available spectrum over time. To tackle such settings, we first develop a heuristic index function motivated by our intuitive understanding of the Whittle index. This heuristic index policy will then provide us a pathway to solving the general scenario.

\subsection{Heuristic Index Policy}
Let's suppose that channel $i$ was last known to be occupied by the PU, i.e., $\hatX_i(t) = 1$. Further, let's suppose that this observation is $\Aoi_i(t)$ time-slots old. The following lemma characterizes how many packets the SU would be able to transmit if it chooses channel $i$ in the next time-slot and continues transmitting until it first observes a collision with the PU. We denote this quantity as $G_i(t)$.
\begin{lem} \label{lem:Approximated Whittle index}
Given channel $i$ occupancy information $\hatX_i(t) = 1$ and the AoI $\Aoi_i(t)$ at time-slot $t$, the number of packets $G_i(t)$ that the SU can transmit on channel $i$ starting at time-slot $t$ until it faces its first collision satisfies the following
\begin{equation}
    \mathbb{E}[G_i(t)] = \frac{[Q_i^{\Aoi_i}]_{10}}{[Q_i]_{10}}.
\end{equation}
\begin{proof}
Intuitively, $[Q_i^{\Aoi_i}]_{10}$ denotes the probability that the channel transitions from occupied to free in the next time-slot, conditioned on $\Aoi_i$-old occupancy information. The term $1/[Q_i]_{10}$ corresponds to the expected free duration from the moment the channel just became free until the first collision. Therefore, the product of these two terms estimates the expected number of packets that can be transmitted before the first collision. For more details, see Appendix G. 
\end{proof}
\end{lem}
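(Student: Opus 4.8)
The plan is to recognize $G_i(t)$ as the length of the run of consecutive free time-slots starting at time $t$, and to compute its expectation by conditioning on the true channel state at time $t$ and then exploiting the memorylessness of the subsequent free-run. Concretely, since the SU transmits at times $t, t+1, t+2, \dots$ and collects one packet in each slot that is free before the first occupied slot, I would write $G_i(t) = \min\{k \ge 0 : X_i(t+k) = 1\}$, i.e. the number of leading free slots. The colliding slot itself is \emph{not} counted as a packet, which is the main place bookkeeping can go wrong.

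First I would condition on whether channel $i$ is actually free at time $t$. Given $\hatX_i(t) = 1$ and AoI $\Aoi_i(t)$, Lemma~\ref{lem:MDP Evolution} gives $\P\big(X_i(t) = 0\big) = [Q_i^{\Aoi_i}]_{10}$ (the occupancy has flipped over $\Aoi_i$ steps) and $\P\big(X_i(t) = 1\big) = [Q_i^{\Aoi_i}]_{00}$. On the event that the channel is occupied at time $t$, the very first transmission collides and $G_i(t) = 0$, so this event contributes nothing to the expectation.

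Next, conditioned on $X_i(t) = 0$, I would argue the number of packets transmitted before the first collision is geometric. Once the channel is free, the Markov property says each subsequent slot stays free with probability $[Q_i]_{00} = 1 - q_i$ and turns occupied (ending the run) with probability $[Q_i]_{10} = q_i$, independently across slots. Hence the run length $M$, counting the initial free slot, obeys $\P(M = m) = (1-q_i)^{m-1} q_i$, giving $\mathbb{E}[G_i(t) \mid X_i(t) = 0] = 1/q_i = 1/[Q_i]_{10}$. Combining via the law of total expectation then yields
\begin{align}
\mathbb{E}[G_i(t)] = \P\big(X_i(t) = 0\big)\,\mathbb{E}\big[G_i(t) \mid X_i(t) = 0\big] = [Q_i^{\Aoi_i}]_{10}\cdot \frac{1}{[Q_i]_{10}},
\end{align}
which is exactly the claimed identity.

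The main obstacle is conceptual rather than computational: one must be careful that conditioning on $X_i(t) = 0$ collapses the subsequent one-step dynamics to the \emph{single-step} transitions $[Q_i]_{00}$ and $[Q_i]_{10}$, so the stale $\Aoi_i$-step information only enters through the probability that the run starts at all, while the run's continuation is governed by the fresh one-step matrix $Q_i$. As an independent cross-check, I would keep the tail-sum route: $\P\big(G_i(t) \ge k\big) = [Q_i^{\Aoi_i}]_{10}(1-q_i)^{k-1}$ for $k \ge 1$, and summing $\mathbb{E}[G_i(t)] = \sum_{k \ge 1}\P\big(G_i(t) \ge k\big)$ recovers the same geometric series $[Q_i^{\Aoi_i}]_{10}/q_i$, confirming the result.
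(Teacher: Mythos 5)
Your proposal is correct and follows essentially the same route as the paper's Appendix G: factor $\mathbb{E}[G_i(t)]$ into the probability $[Q_i^{\Aoi_i}]_{10}$ that the channel is actually free at time $t$ (via Lemma~\ref{lem:MDP Evolution}) times the expected geometric free-run length $1/q_i = 1/[Q_i]_{10}$. Your version is in fact slightly tighter than the paper's, since you make the law-of-total-expectation step and the zero contribution of the occupied case explicit, and the tail-sum cross-check confirms the same answer.
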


We can think of this expectation as the potential reward for choosing channel $i$ at time-slot $t$. This motivates a heuristic index policy for our spectrum access problem - where we use the following index function $V_i(\hatX_i, \Aoi_i)$ instead of the Whittle index functions $W_i(\hatX_i,\Aoi_i)$ derived in Theorem~\ref{thm:Whittle Index}.
\begin{align} \label{eq:Heuristic index}
V_i(\hatX_i, \Aoi_i) \triangleq \begin{cases}
    \infty&, \text{ if } \hatX_i = 0\\
    \frac{[Q_i^{\Aoi_i}]_{10}}{[Q_i]_{10}}&, \text{ if } \hatX_i =1.
\end{cases}
\end{align}

By substituting \eqref{eq:Heuristic index} in place of \eqref{eq:Whittle index} in Algorithm \ref{alg:Whittle Index + Threshold Policy}, we can implement a heuristic index policy, which leverages the simplified index for efficient spectrum access decisions. We will show later via simulations in Section~\ref{sec:Numerical results} that this heuristic index performs almost as well as the Whittle index for independent PU occupancies across the channels. This heuristic index in \eqref{eq:Heuristic index} can be understood as an intuitive simplification of the Whittle index in \eqref{eq:Whittle index}. Through this simplification, we are able to interpret the key quantities that drives decision making for a good spectrum access policy. 

The benefits of  this simpler index are evident in when the PU occupancies are correlated across channels. The new index can be reformulated as a function of the observed PU occupancy history in correlated scenarios (since observations of channel $j$ will affect estimates for channel $i$ even if $i\neq j$).
\begin{align} \label{eq:Correlated index}
&V_i(\text{PU occupancy history}) \notag \\
&\triangleq \begin{cases}
    \infty,& \text{ if } \hatX_i = 0\\
    \mathbb{E}[G_i(t) | \text{PU occupancy history}],& \text{ if } \hatX_i=1.
\end{cases}
\end{align}

This conditional expectation is often easy to compute \textit{or estimate using data}, and we can utilize the resulting index functions $V_i$ to implement an index policy for correlated or data-driven scenarios. In Section~\ref{sec:Numerical results} we will show that we can effectively do this for PU occupancies that move in a contiguous block over time. The heuristic index significantly outperforms all baseline policies, thus extending our results and intuition to the correlated scenario. 


\section{Numerical Results} \label{sec:Numerical results}

In this section, we present simulation results that validate our analytical findings on the performance of the Whittle index. The simulations are conducted with a time horizon of $T=30000$, and each data point represents an average over 100 independent simulation runs. Unless specified otherwise, the Markov channels are set to have transition probabilities evenly spaced from $q_{min}=0.1$ to $q_{max} = 0.5$. 
Additionally, the collision penalty $\gamma$ is set to a default value of 0.5.

\subsection{Independent Channel Occupancies}
Under the independent channel setting, we compare the performance of four different policies: 1) The Pure Random policy, that selects $L$ channels uniformly at random regardless of previous selections, which serves as a lower bound for performance and is similar to the random access characteristics of the slotted ALOHA protocol \cite{abramson1970aloha, abramson1973packet, kleinrock1973packet}. 2) The Check Empty + Random policy which continues transmissions on channels that were found to be empty in the previous time-slot, and chooses a new channel uniformly at random for each scheduling decision that leads to a collision. This approach is similar to Carrier Sense Multiple Access (CSMA) protocol \cite{kleinrock1975packet_part1}, where a node senses the channel and transmits only if it is free, and making a random decision otherwise. 3) The Whittle index policy, described in Section \ref{sec:Scheduling Policies}, which selects channels with the highest Whittle indices when collision occurs. 4) The Heuristic index policy described in Section \ref{sec:Correlated Channel}, which utilizes \eqref{eq:Heuristic index} to approximate the Whittle index. 

Figure \ref{fig:Varying_L} depicts the impact of the selected number of channels $L$ on system performance. From Figure \ref{fig:Varying_L_vs_Throughput}, the average SU throughput normalized per channel decreases as $L$ increases due to higher competition, with the index policies performing best and the Pure Random policy worst. From Figure \ref{fig:Varying_L_vs_Loss}, the average collision rate rises with $L$, but the Whittle index policy reduces collisions effectively. Notably, when selecting between $1$ channel and $N/4$ channels, our Whittle index policy achieves up to a \textbf{19\% average SU throughput gain} and a \textbf{39\% collision rate reduction} compared to the Check Empty + Random policy. The Heuristic index policy also closely approximates the performance of the Whittle index.

\begin{figure}[h!]
    \centering
    \subfigure[Normalized throughput vs. $L$]{
        \includegraphics[width=0.47\columnwidth]{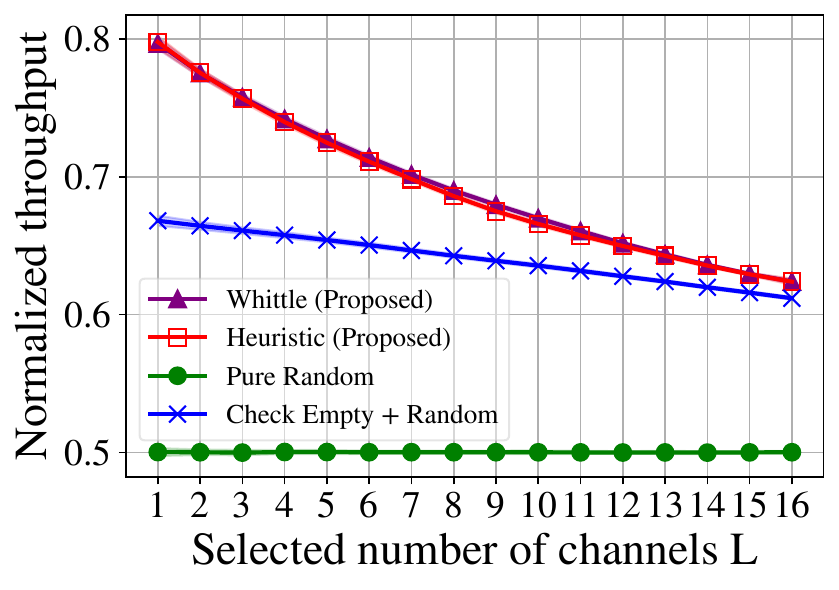}
        \label{fig:Varying_L_vs_Throughput}
    }   
    \hspace{-10pt}
    \subfigure[Average collision rate vs. $L$]{
        \includegraphics[width=0.47\columnwidth]{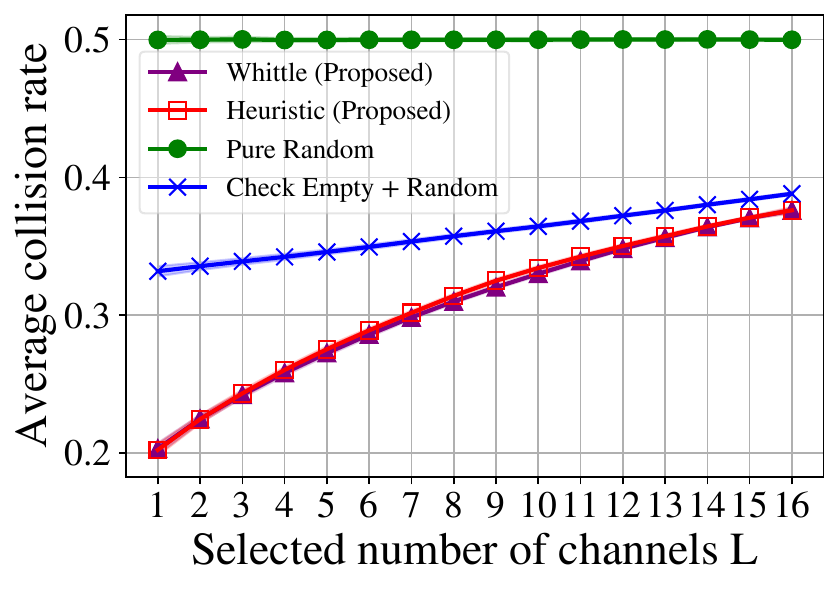}
        \label{fig:Varying_L_vs_Loss}
    }
    \caption{Performances of different scheduling policies vs. selected channel number $L$ under independent channel occupancies (fixed total channel number $N=32$).}
    \label{fig:Varying_L}
    \vspace{-3mm}
\end{figure}

\begin{figure}[hbt!]
    \centering
    \subfigure[Normalized throughput vs. $q_{max}$]{
        \includegraphics[width=0.47\columnwidth]{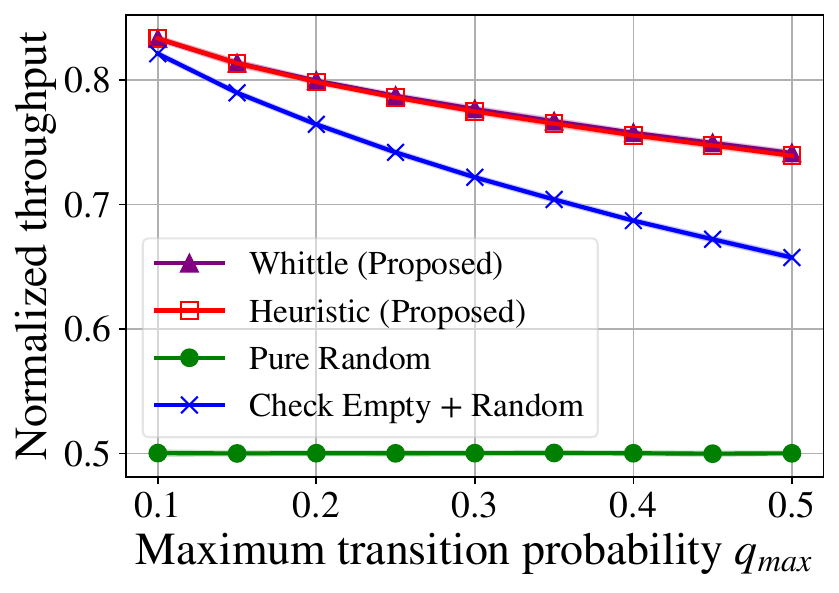}
        \label{fig:Varying_q_max_Throughput}
    }
    \hspace{-10 pt}
    \subfigure[Average collision rate vs. $q_{max}$]{
        \includegraphics[width=0.47\columnwidth]{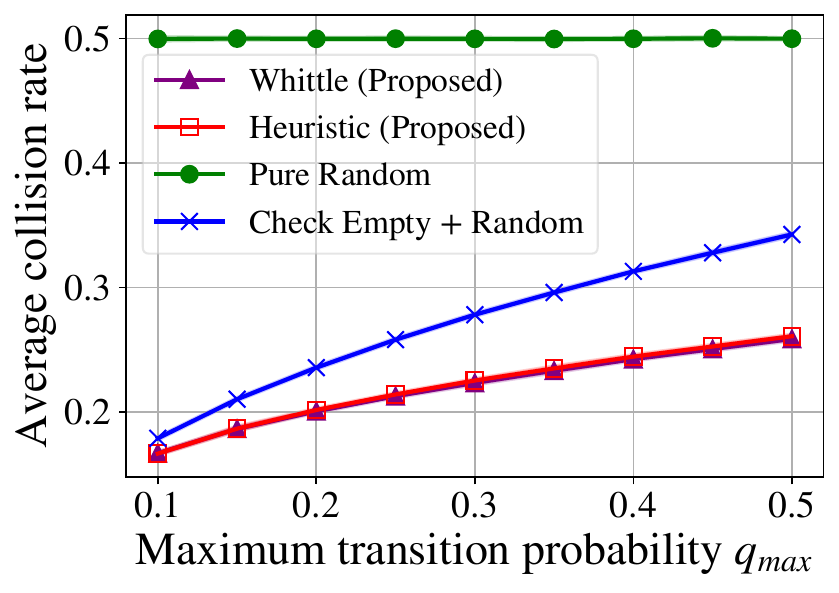}
        \label{fig:Varying_q_max_Loss}
    }
    \caption{Performances of different scheduling policies vs. maximum transition probability $q_{max}$ under independent channel occupancies (fixed total channel number $N=32$ and selected channel number $L=4$).}
    \label{fig:Varying_q_max}
    \vspace{-3mm}
\end{figure}

Next, Figure \ref{fig:Varying_q_max} illustrates how changes in $q_{max} \leq 0.5$ influence system performance. Here, $q_{max}$ 
which quantifies the uncertainty in channel occupancy. From Figure \ref{fig:Varying_q_max_Throughput}, the average SU throughput decreases as $q_{max}$ increases, indicating that higher channel uncertainty negatively affects SU throughput. The Whittle index policy consistently outperforms the Check Empty + Random policy, while the Pure Random policy remains the worst.  
The Whittle index policy achieves up to a \textbf{13\% gain in average SU throughput} and consistently maintains a \textbf{24\% reduction in the collision rate} compared to the Check Empty + Random policy, regardless of $q_{max}$, demonstrating its robustness against channel uncertainty. 

Finally, to understand the effect of system scale, Figure \ref{fig:Varying_N} presents the results under varying total channels $N$ with a fixed selected number $L=N/4$. The overall performance remains stable across different values $N$ since the proportion of selected channels stays constant. As before, the Whittle index policy consistently achieves higher throughput and lower collision loss compared to other strategies. Initially, performance variance is higher due to the small number of stable channels but stabilizes as $N$ increases.  In Figure \ref{fig:Varying_N_Throughput} and \ref{fig:Varying_N_Loss}, the Whittle index policy achieves the highest throughput and the lowest collision rate, followed by the Check Empty + Random policy. While the Pure Random policy consistently performs the worst in both aspects across all values of $N$. 
Figure \ref{fig:Varying_N_overall} highlights these trends by comparing SU throughput and collision rate across different policies. This figure serves as a performance metric where the upper-left region indicates superior performance, while the lower-right region represents inferior performance. Our Whittle and heuristic index policies consistently maintains superior performance, achieving high throughput while minimizing collision loss. Crucially, the Whittle and heuristic index strategies remain effective regardless of system scale, demonstrating their robustness in maintaining performance even as the total number of available channels increases. 

\begin{figure}[h]
    \vspace{-2.0mm}
    \centering
    \subfigure[Normalized throughput vs. $N$]{
        \includegraphics[width=0.47\columnwidth]{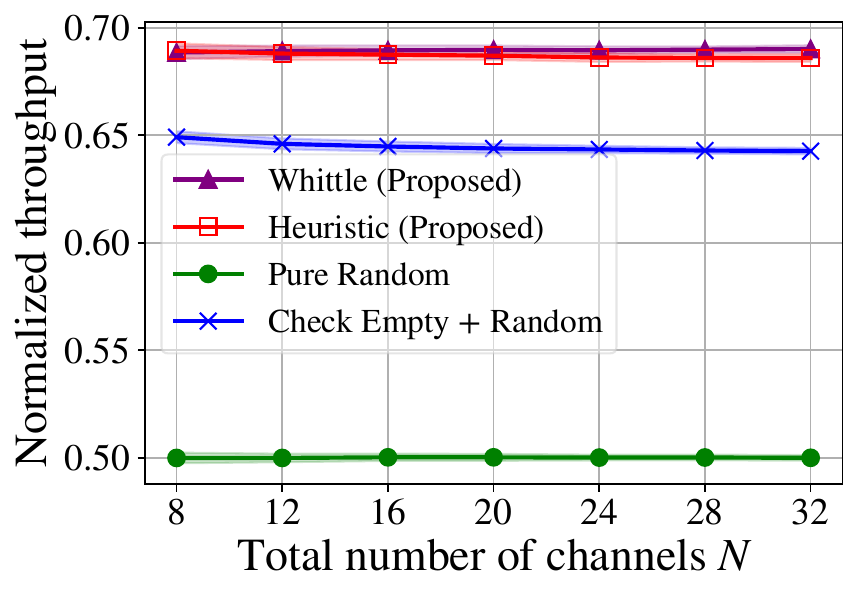}
         \label{fig:Varying_N_Throughput}
    }
    \hspace{-10 pt}
    \subfigure[Average collision rate vs. $N$]{
        \includegraphics[width=0.47\columnwidth]{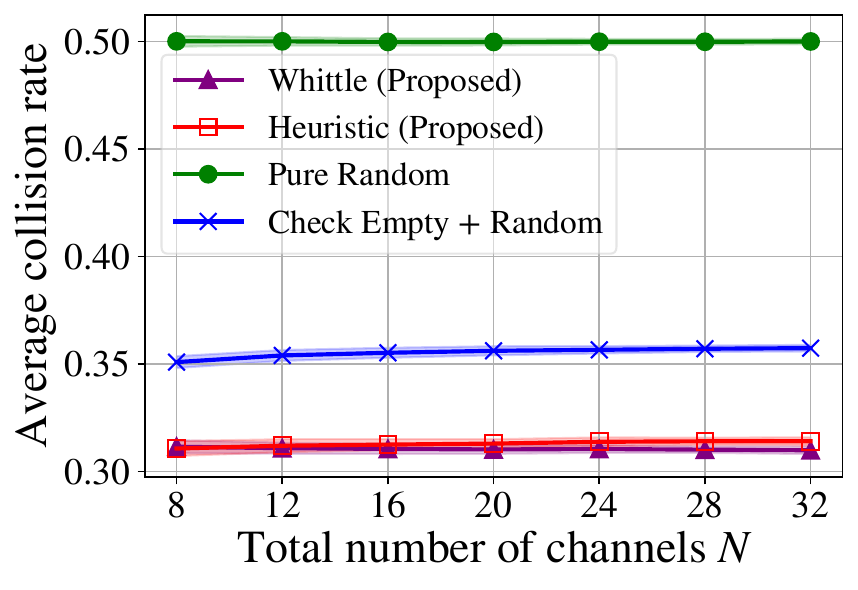}
         \label{fig:Varying_N_Loss}
    }
    \hfill
    \subfigure[Average collision rate vs. Normalized throughput for varying $N$]{
         \includegraphics[width=0.85\columnwidth]{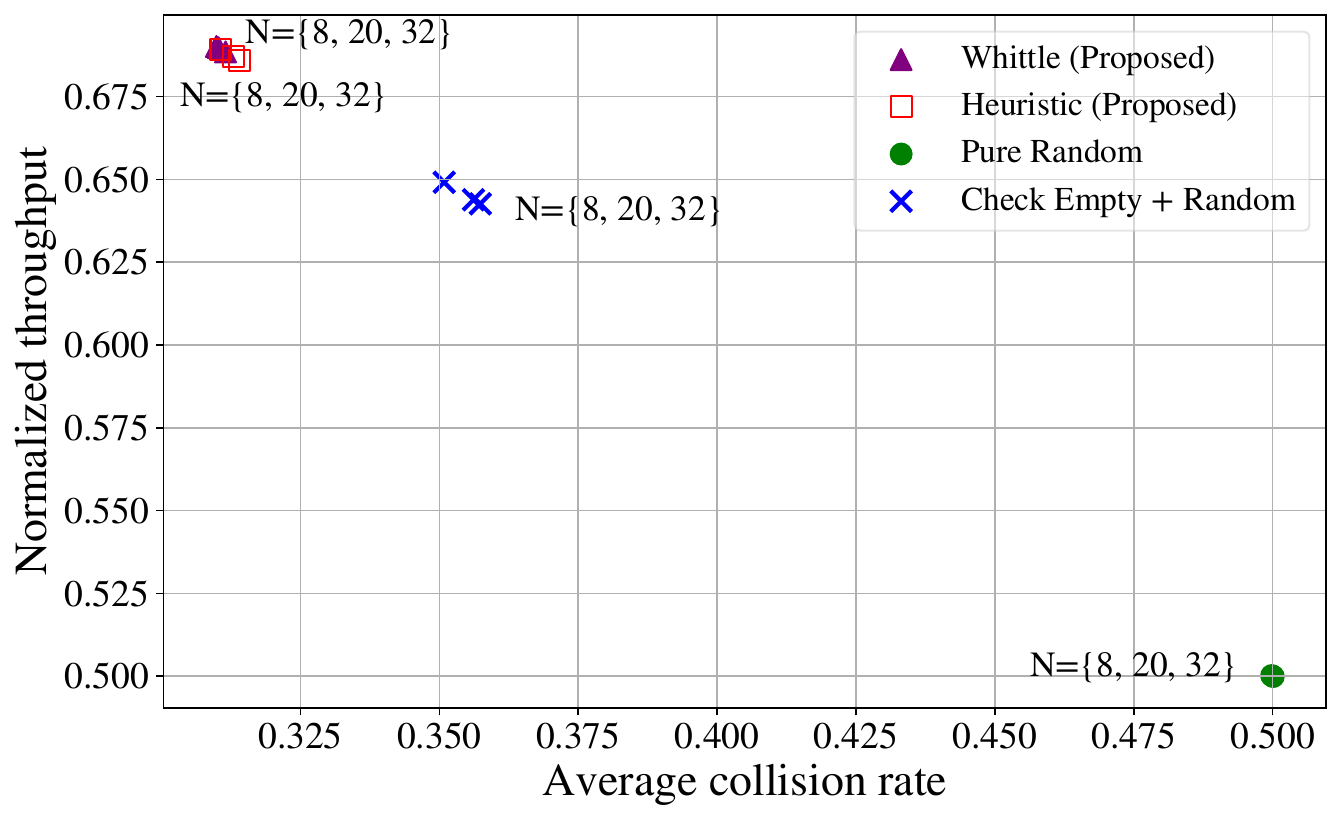}
          \label{fig:Varying_N_overall}
    }
    
    \vspace{-2mm}
    
    \caption{Performances of different scheduling policies vs. total channel number $N$ under independent channel occupancies (fixed selection ratio $L/N=1/4$).}
    \label{fig:Varying_N}
    
    \vspace{-1.5mm}
\end{figure}

\subsection{Learning with Stationary Channel Statistics} \label{subsec:Stationary learning}
We now repeat the numerical simulation from Figure \ref{fig:Varying_L} but with two variations of our proposed policies, that do not know the underlying Markov transition matrices $Q_i$ in advance. So along with the four algorithms already plotted, we add the Whittle index policy with learning, and the Heuristic index policy with learning. For these, the transition probabilities $q_i$ are not known a priori and instead estimated over time using transmission outcomes. In this subsection we assume the underlying transition probabilities remain fixed (stationary) over time.

The online estimate of the collision probabilities $\hat{q}_i(t)$ can be computed as a function of the number of transitions of the channel state ($0\rightarrow 0$ and $0 \rightarrow 1$) observed by the SU up to time $t$. The precise estimator is given by

\begin{align} \label{eq:estimator q_MLE}
 \hat{q}^{MLE}_i(t) = \frac{N^{0\rightarrow 1}_i(t)}{N^{0\rightarrow 1}_i(t) + N^{0\rightarrow 0}_i(t)},
\end{align}
where $N^{\hat{X}\rightarrow \hat{Y}}_i(t)$ gives the cumulative number of observed transitions of channel $i$'s occupancy from state $\hat{X}$ to state $\hat{Y}$ up to time $t$.
This estimator is the Maximum Likelihood Estimator (MLE) for the transition probability of a two-state Markov chain. It is derived by maximizing the likelihood of observing the transition counts $N^{0\rightarrow 1}_i(t)$ and $N^{0\rightarrow 0}_i(t)$. which simplifies to the ratio of observed event frequencies \cite{anderson1957statistical}.

\begin{figure}[hbt!]
    \vspace{-2mm}
    \centering
        
    \subfigure[Normalized throughput vs. $L$]{
        \includegraphics[width=0.9\columnwidth]{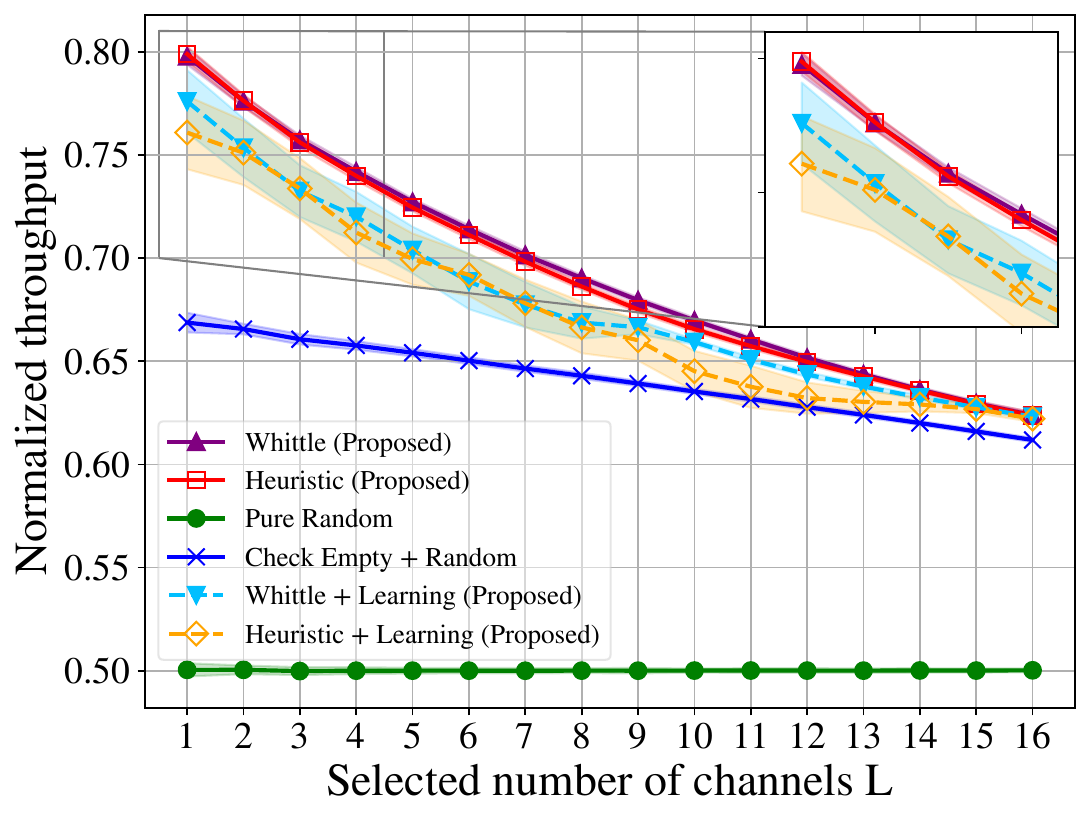}
        \label{fig:Stationary Learning:Varying_L_vs_Throughput}
    }
    
    \subfigure[Average collision rate vs. $L$]{
        \includegraphics[width=0.9\columnwidth]{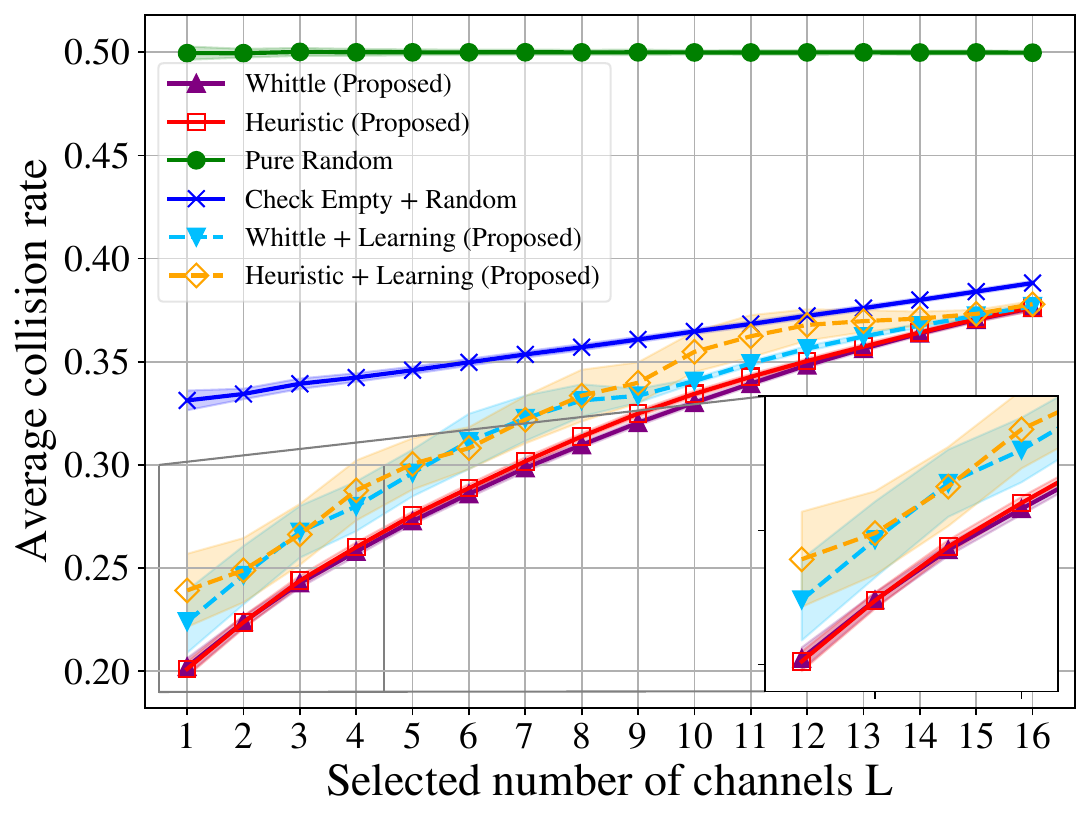}
        \label{fig:Stationary Learning:Varying_L_vs_Loss}
    }  
   
    \caption{Performances of different scheduling policies with and without online estimation of $\hat{q}^{MLE}_i$.}
    \label{fig:Stationary Learning:Varying_L}  
    \vspace{-2mm} 
\end{figure}

Figure \ref{fig:Stationary Learning:Varying_L} depicts the impact of the selected number of channel $L$ on system performance. The learning-based Whittle and heuristic policies maintain performance levels that closely approximate their non-learning counterparts. Notably, the Whittle + Learning policy achieves up to a \textbf{16\% average SU throughput gain} and a \textbf{32\% collision rate reduction} compare to the Check Empty + Random policy. The Heuristic + Learning policy also closely approximates the performance of the Whittle + Learning policy. The learning policies exhibit slightly higher standard deviation across individual runs compared to their non-learning counterparts, due to the online estimation of $q_i$ values.  

\subsection{Learning with Non-Stationary Channel Statistics}
We now relax the fixed-transition assumption of Subsection~\ref{subsec:Stationary learning} and consider non-stationary environment where the Markov transition probabilities $q_i(t)$ may evolve slowly over time due to changes in PU activity or channel conditions, which we model as piecewise linearly varying. Specifically, half of the channels exhibits increasing trends while the other half exhibit decreasing trends over time. In this setting, the standard MLE estimator in \eqref{eq:estimator q_MLE} is not well-suited as it gives equal weight to all past observations. As a consequence, the resulting estimate can lag behind the true dynamics, resulting in degraded performance. 

To address this challenge, we introduce a window-based adaptive estimator that employs an exponential forgetting mechanism~\cite{haykin2002adaptive}, which downweights older observations by applying a forgetting factor $\alpha\in(0,1]$ and emphasizes recent data. First, we update the exponentially weighted cumulative transition counts at each time slot $t$ according to the following recursive rule:

\begin{equation} \label{eq:recursive_forgetting_factor}
\tilde{N}^{\hat{X}\rightarrow \hat{Y}}_i(t)=
\begin{cases}
\begin{array}{l}
\lfloor \alpha \cdot \tilde{N}^{\hat{X}\rightarrow \hat{Y}}_i(t-1)\rfloor 
+ \Delta \tilde{N}^{\hat{X}\rightarrow \hat{Y}}_i(t), \\
\qquad \qquad \qquad \text{if } (t-1)=kT_w ~\forall k\in \mathbb{Z}^+,
\end{array} \\[1.5em]
\begin{array}{l}
\tilde{N}^{\hat{X}\rightarrow \hat{Y}}_i(t-1)
+ \Delta \tilde{N}^{\hat{X}\rightarrow \hat{Y}}_i(t), ~\text{otherwise},
\end{array}
\end{cases}
\end{equation}
where $\Delta \tilde{N}^{\hat{X}\rightarrow \hat{Y}}_i(t)$ is the incremental number of observed transitions of channel $i$ since the previous update, and $T_w$ denotes the observation window length in time slots. 

Then, using the updated counts, we compute the Exponentially Weighted Maximum Likelihood Estimator (EW-MLE) as 

\begin{equation} \label{eq:estimator q_EW-MLE}
 \hat{q}^{EW-MLE}_i(t) = \frac{\tilde{N}^{0\rightarrow 1}_i(t)}{\tilde{N}^{0\rightarrow 1}_i(t) + \tilde{N}^{0\rightarrow 0}_i(t)},
\end{equation}
which reduces to the standard MLE in \eqref{eq:estimator q_MLE} when $\alpha=1$. Within each window, the estimator behaves as the standard MLE by accumulating transition counts without decay, while the forgetting factor $\alpha$ is applied at the boundary between consecutive windows to downweight older data.

Figure \ref{fig:Non-Stationary Learning:Varying_L} depicts the performance of various scheduling policies under varying $L$, highlighting the impact of the exponentially weighted online estimator $\hat{q}^{EW-MLE}$. The learning-based Whittle and heuristic policies maintain performance levels that closely approximate their non-learning counterparts. Notably, the Whittle + Learning policy achieves up to a \textbf{9\% average SU throughput gain} and a \textbf{22\% collision rate reduction} compare to the Check Empty + Random policy. The Heuristic + Learning policy also closely approximates the performance of the Whittle + Learning policy. Even under non-stationary transition probabilities $q_i(t)$, our approach maintains high throughput and low collision rates, demonstrating its adaptability.  

\begin{figure}[hbt!]
    \vspace{-2mm}
    \centering
     
    \subfigure[Normalized throughput vs. $L$]{
        \includegraphics[width=0.9\columnwidth]{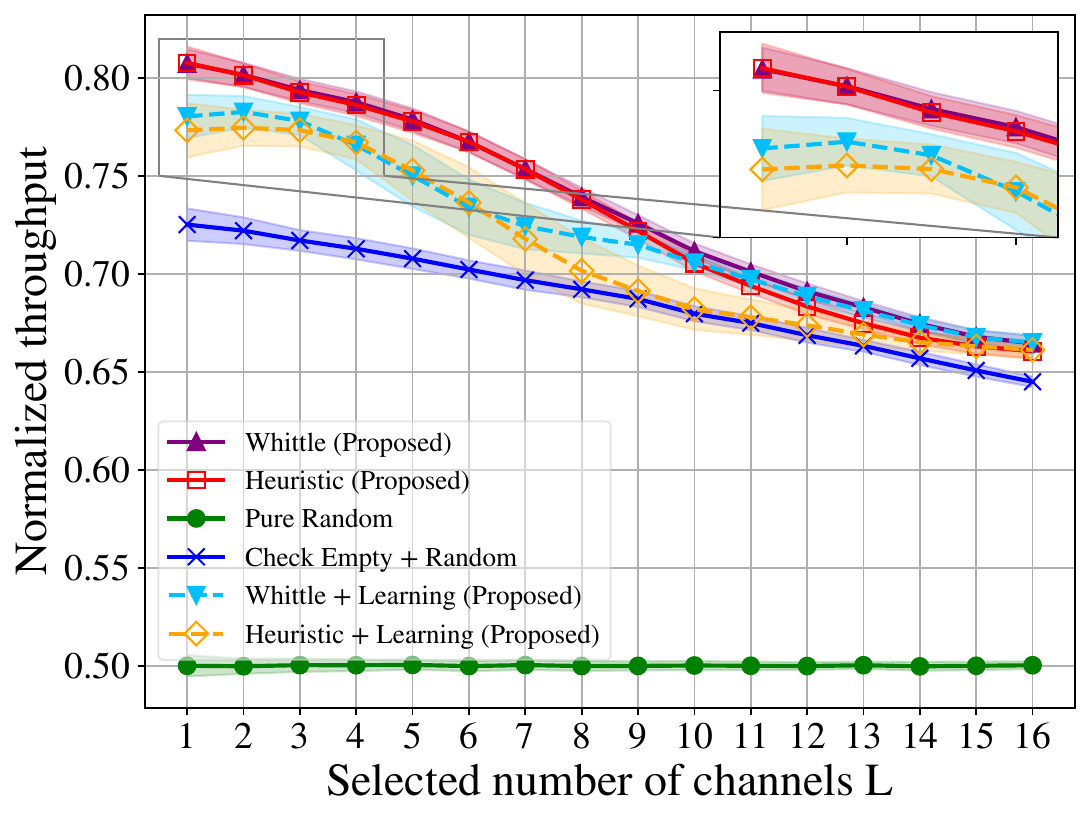}%
        \label{fig:Non-Stationary Learning:Varying_L_vs_Throughput}
    }
     
    \subfigure[Average collision rate vs. $L$]{
        \includegraphics[width=0.9\columnwidth]{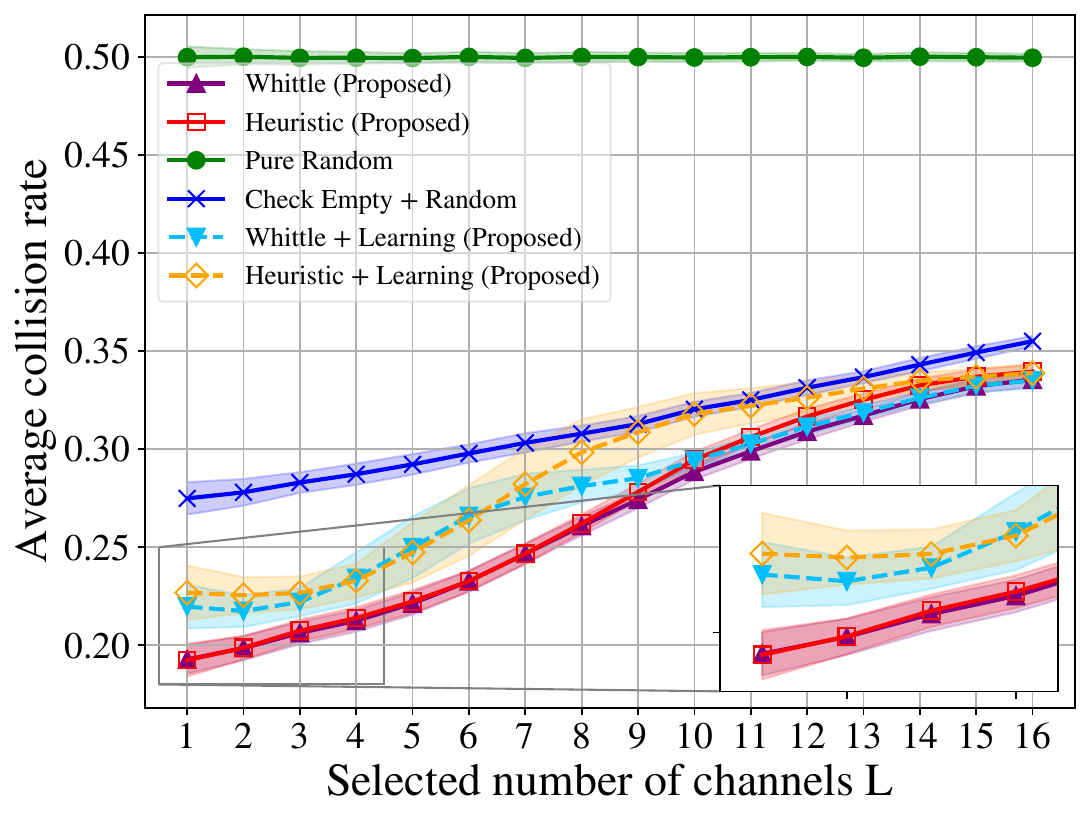}%
        \label{fig:Non-Stationary Learning:Varying_L_vs_Loss}
    }  
    
    \caption{Performances of different scheduling policies with and without online estimation of $\hat{q}^{EW-MLE}_i$.}
    \label{fig:Non-Stationary Learning:Varying_L}  
    \vspace{-2mm} 
\end{figure}

\subsection{Correlated Channel Occupancies}
We now study the correlated channel setting. Here, we compare the performance of three different policies: 1) The Pure Random policy, 2) The Check Empty + Random policy and 3) The Heuristic index policy described in Section \ref{sec:Correlated Channel}, which utilizes \eqref{eq:Correlated index} for computing heuristic indices from PU occupancy history. We assume that the PU occupies a contiguous band of $B=12$ channels out of a total of $N=16$ available channels, and the center position of this band follows a discretized Gaussian random walk over time, with noise variance $\sigma^2=1$.

To find an expression for the correlated index based on \eqref{eq:Correlated index}, we approximate the expected free duration of channel $i$ given all prior observation history. We use the idea of mean first-passage times (MFPT) in a Brownian motion process \cite{redner2001guide}, which calculates how long the process takes to hit a specified threshold and is known to scale quadratically with the distance between the current value and the threshold. The correlated index for channel $i$ in this case takes the following form
\begin{align}
&V_i(\text{PU occupancy history})= \notag \\
&\begin{cases}
\infty, \text{ if } \hatX_i = 0\\
\Big[ 1 - \Phi(\frac{i+B/2-\hatP(t)}{\sqrt{\bar\Aoi}}) + \Phi(\frac{i-B/2-\hatP(t)}{\sqrt{\bar\Aoi}})\Big]|i-\hatP(t)|^2, &\text{ o.w.} 
\end{cases}
\end{align}
where $\Phi(\cdot)$ is the standard normal CDF, $\bar{\Delta}$ is the current average AoI across all channels, and $\hatP(t)$ is the current estimate of the center of the PU occupancy band given past observations. We provide a detailed description of how we model correlated channel occupancies for this block setting and compute the corresponding heuristic index in Appendix H. 

Figure \ref{fig:Correlated_Varying_L} depicts the impact of the selected number of channels $L$ on system performance. From Figure \ref{fig:Correlated_Varying_L_Throughput}, the average SU throughput normalized per channel decreases as $L$ increases due to increased contention, with the heuristic index policy performing best. Given that there are $N=16$ total channels and $B=12$ of them are occupied by the PU, the SU can only succeed on the remaining 4 channels. Despite this constraint, the Heuristic index policy consistently achieves over 80\% success probability, while the Check Empty + Random policy remains below 80\%. This clearly shows the advantage of incorporating occupancy correlations: the Heuristic index policy guides the SU towards the available channels, even when the system is heavily loaded. From Figure \ref{fig:Correlated_Varying_L_Loss}, the average collision rate rises with $L$, but again, the Heuristic index policy effectively mitigates collisions by utilizing correlated PU behavior. In particular, compared to the Check Empty + Random policy, our heuristic index policy achieves up to a \textbf{43\% gain in average SU throughput} and an \textbf{82\% reduction in the collision rate} when the number of selected channels range from 1 to the maximum 4 available, significantly outperforming it in the most critical operational range, where scheduling is necessary. This underscores the value of leveraging PU correlation to minimize collisions in spectrum access.

\begin{figure}[h]
    \vspace{-2mm}
    \centering
    \subfigure[Normalized throughput vs. $L$]{
        \includegraphics[width=0.47\columnwidth]{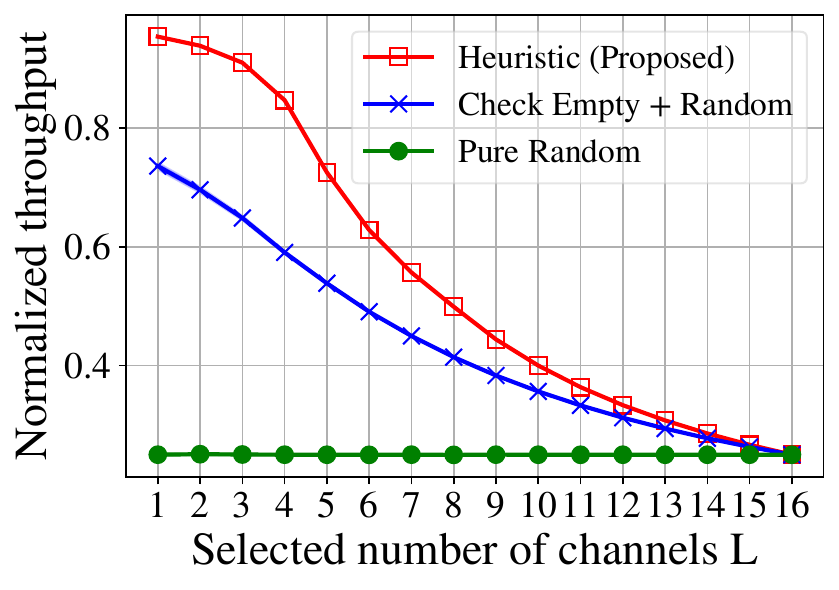}
        \label{fig:Correlated_Varying_L_Throughput}
    }
    \hspace{-10 pt}
    \subfigure[Average collision rate vs. $L$]{
        \includegraphics[width=0.47\columnwidth]{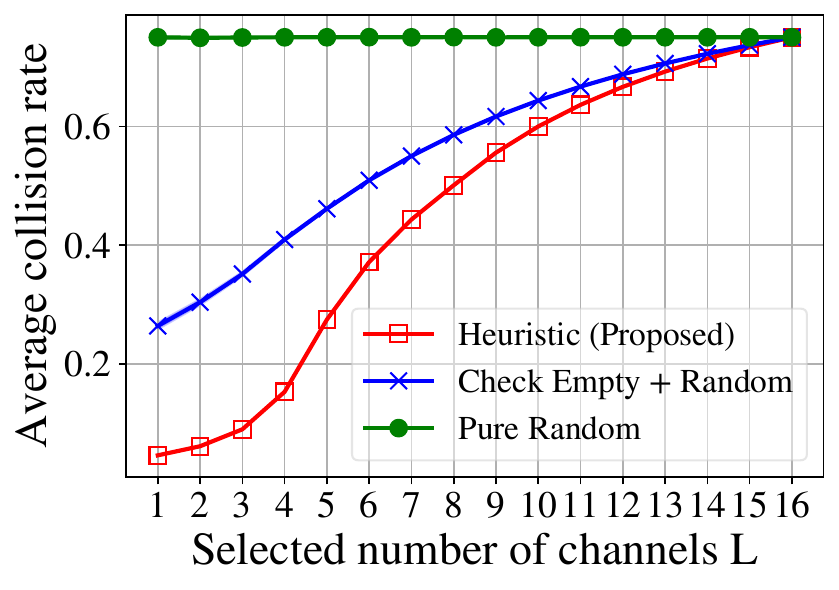}
        \label{fig:Correlated_Varying_L_Loss}
    }
        \vspace{-2mm}
    \caption{Performances of different scheduling policies vs. selected channel number $L$ under correlated channel occupancies.} 
    \label{fig:Correlated_Varying_L}
    \vspace{-2mm}
\end{figure}

\begin{figure}[hbt!]
     \centering
     \subfigure[Normalized throughput vs. $\sigma$]{
         \includegraphics[width=0.47\columnwidth]{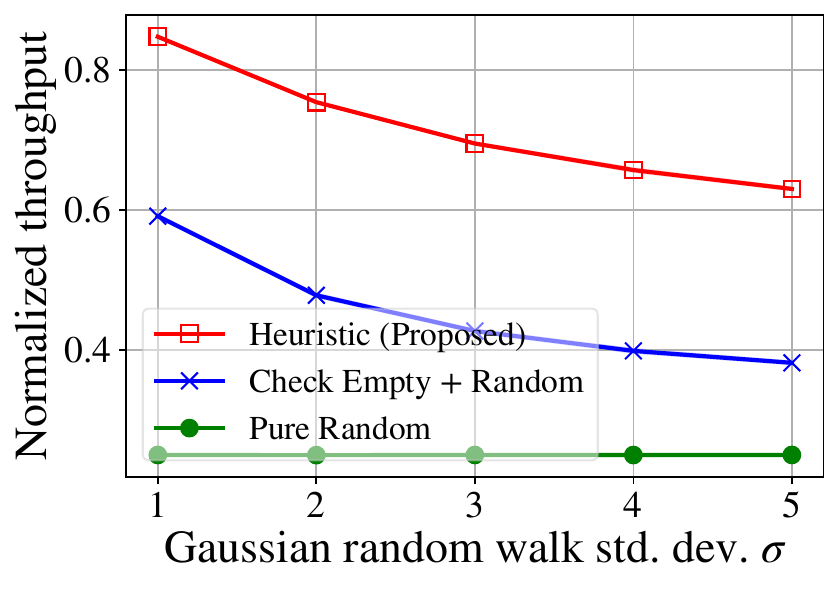}
          \label{fig:Correlated_Varying_sigma_Throughput}
     }
     \hspace{-10 pt}     
     \subfigure[Average collision rate vs. $\sigma$]{
         \includegraphics[width=0.47\columnwidth]{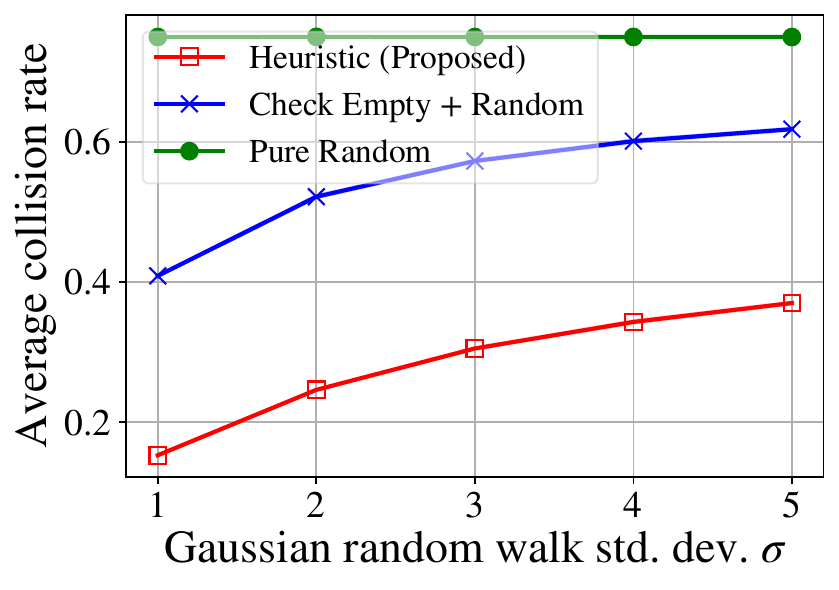}
          \label{fig:Correlated_Varying_sigma_Loss}
     }     
     \hfill
     \subfigure[Average collision rate vs. Normalized throughput for varying $\sigma$]{
          \includegraphics[width=0.85\columnwidth]{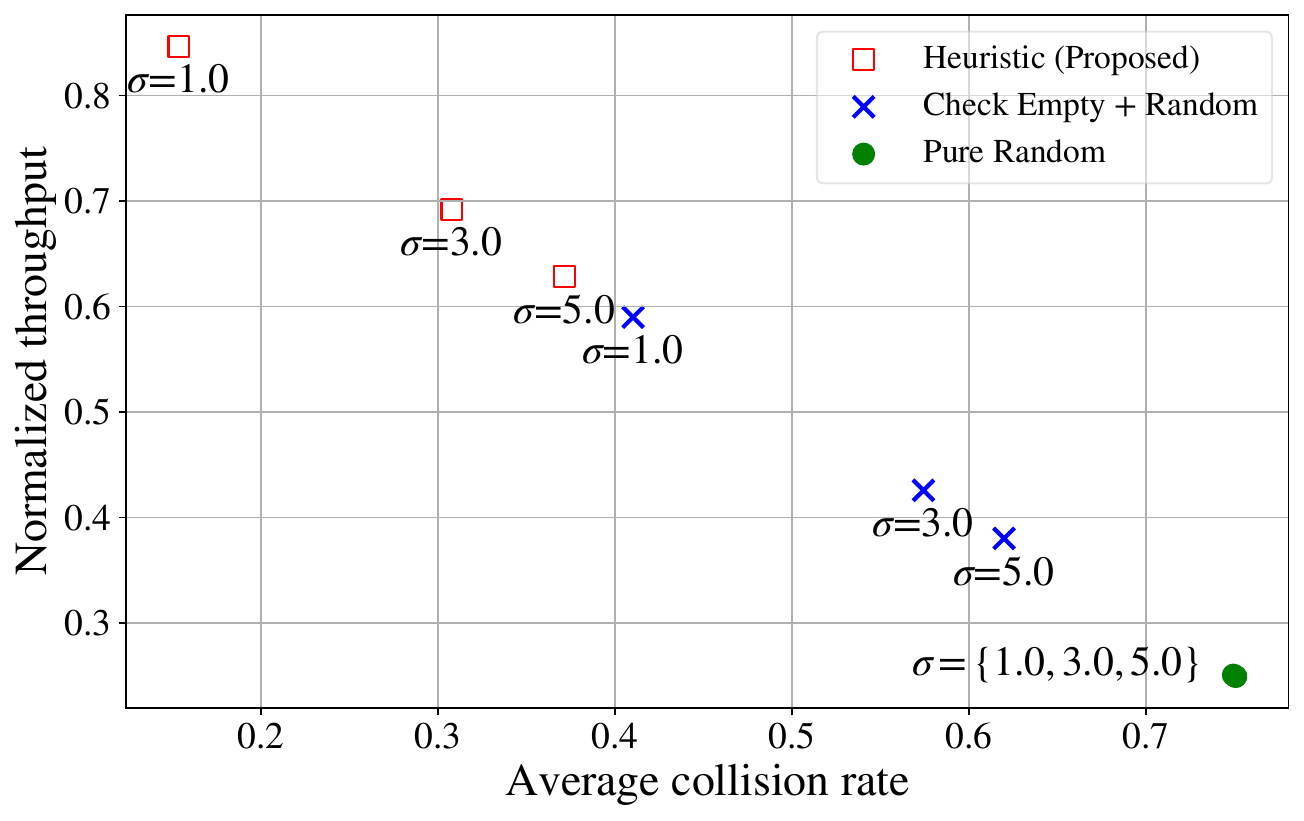}
           \label{fig:Correlated_Varying_sigma_overall}
     }
  
     \vspace{-2mm}
     
     \caption{Performances of different scheduling policies vs. Gaussian random walk standard deviation $\sigma$ (fixed total channel number $N=16$, PU band length $B=12$, and selected channel number $L=4$).}
     \label{fig:Correlated_Varying_sigma}
     
     \vspace{-3mm}
\end{figure}

Next, Figure \ref{fig:Correlated_Varying_sigma} illustrates how changes in Gaussian random walk standard deviation $\sigma$ influence system performance. In Figure \ref{fig:Correlated_Varying_sigma_Throughput}, we observe that increasing the standard deviation $\sigma$ of the Gaussian random walk leads to a gradual degradation in SU throughput for all strategies, since the PU occupancy band moves at a faster rate leading to higher uncertainty. Despite this, our heuristic index policy achieves up to a \textbf{65\% gain in average SU throughput} compared to the Check Empty + Random policy, consistently maintaining values above 0.6 even under increased mobility uncertainty $\sigma$, demonstrating strong robustness of the Heuristic index policy. Figure \ref{fig:Correlated_Varying_sigma_Loss} presents the average collision rate, where the proposed heuristic maintains the lowest collision rate among the all schemes, achieving up to a \textbf{62\% reduction in the collision rate} compared to the Check Empty + Random policy. To better visualize the trade-off between throughput and collision rate, Figure \ref{fig:Correlated_Varying_sigma_overall} plots these two metrics against each other. It clearly highlights that the proposed heuristic index policy achieves a superior balance, sustaining relatively high throughput while keeping the collision rate low even at high $\sigma$ values. 

These results demonstrate that the proposed heuristic index policy motivated by Whittle index also performs well in PU occupancy models that are correlated across channels.

\section{Conclusion} \label{sec:Conclusion}
In this work, we studied the spectrum sharing problem where a secondary user (SU) opportunistically communicates over multiple Markov channels occupied by a primary user (PU). Using this model, we established indexability of the decoupled problem and derived the Whittle index policy, which demonstrates strong performance by achieving both high SU throughput and low collision rate. Furthermore, we extended our approach to accommodate PU occupancy models with inter-channel correlation and to incorporate learning of unknown Markov matrices by proposing a heuristic index function inspired by the Whittle index. 

Important directions of future work involve 1) proving theoretical guarantees for the heuristic index policy in scenarios with correlated channels, 2) considering multiple SUs and competition between them for open channels, and 3) extending data-driven online learning and prediction for more general PU occupancy models.



\bibliographystyle{IEEEtran}
\bibliography{Refs/ALOHA_CSMA, Refs/AoI,Refs/Channel_Model, Refs/Dynamic_Programming, Refs/Spectrum_Sensing, Refs/Virtual_Queue, Refs/Whittle_Index}

\appendices

\section{Proof of Theorem \ref{thm: Decoupled Singe-Channel Problem}:Relaxation and Decoupling}

As a preliminary step, we consider a relaxed version of the original spectrum sharing problem \eqref{eq:P1, Unified}, where the spectrum access constraint is satisfied on average rather than at every time-slot. This leads to the following optimization problem:

\begin{equation} \label{eq:P1-b}
\begin{aligned}
     \max_{\pi} \quad &   \lim \limits_{T \to \infty} \mathbb{E}_{\pi} \left[  
  \frac{1}{L} \frac{1}{T} \sum_{i=1}^{N} \sum_{t=1}^{T} \Big(R_{i}(t) - C_i(t)\Big) \right] \\
    \textrm{s.t.}     
    \quad & \lim \limits_{T \to \infty} \frac{1}{T} \sum_{i=1}^{N}  \sum_{t=1}^{T}  u_i(t) \leq L,\\
    \quad & u_i(t) \in \{0,1\}, \forall i \in [N], \forall t \in [T].
\end{aligned}
\tag{P1-b}
\end{equation}

The key difference between \eqref{eq:P1, Unified} and \eqref{eq:P1-b} is that \eqref{eq:P1-b} has just one spectrum access constraint over the entire time horizon instead of a constraint for every time-slot. This makes it amenable to solving via a Lagrangian relaxation \cite{kriouile2024asymptotically}. Given a Lagrange multiplier $C>0$ for the time-average spectrum access constraint, we can convert \eqref{eq:P1-b} to the following optimization problem:

\begin{equation} \label{eq:P1-c}
\begin{aligned}
     \max_{\pi} \quad &  \lim \limits_{T \to \infty} \mathbb{E}_{\pi} \left[ \frac{1}{T} \sum_{i=1}^{N}\sum_{t=1}^{T} \Big(R_{i}(t) - C_i(t) - C u_i(t) \Big) \right]
     \\
    \textrm{s.t.}     
    \quad & u_i(t) \in \{0,1\}, \forall i \in [N], \forall t \in [T].  
\end{aligned}    
\tag{P1-c}
\end{equation}

Note that in \eqref{eq:P1-c}, the objective function now separates as a sum across each channel without any global constraints that couple decision-making across them. In particular, the Lagrangian multiplier $C$ can be interpreted as a transmission cost for the SU for using a channel. Specifically, whenever the SU selects channel $i$ for transmission ($u_i(t)=1$), it must pay a cost of $C$ for accessing the channel $i$. Based on the definition of SU throughput and collision penalties on channel $i$, we can simplify the objective function for channel $i$ as follows:
\begin{align}
\begin{aligned}
     &R_i(t) - C_i(t) -Cu_i(t) \\
     &= u_i(t)(1-X_i(t)) -\gamma u_i(t) X_i(t)-Cu_i(t) \\
     &= (1+\gamma) R_i(t) - (\gamma+C)u_i(t).    
\end{aligned}
\end{align}

By normalizing with the constant $1+\gamma$ and introducing the effective transmission cost $D=(\gamma+C)/(1+\gamma)$, we obtain the following decoupled problem for a single channel:

\begin{align}
\begin{aligned}
    \max_{\pi} \quad & \lim \limits_{T \to \infty} \mathbb{E}_{\pi} \left[  \frac{1}{T}   \sum_{t=1}^{T} \Big(R_{i}(t) - D u_i(t)\Big) \right] \\   
   \textrm{s.t.}     
   \quad & u_i(t) \in \{0,1\}, \forall t \in [T]. \\  
\end{aligned} 
\tag{P1-d}
\end{align}
\qed

\section{Proof of Lemma \ref{lem:MDP Evolution}: MDP Evolution} \label{app:MDP Evolution}
We prove the result by induction of on $\Aoi_i(t)$. For any  $\Aoi_i(t) \geq 1$, the actual primary user (PU) occupancy at time $t$ is determined by the Markov transition matrix $Q_i$, given by 
\begin{align}
\begin{bmatrix}
\P\{X_i(t) = 0\} \\
\P\{X_i(t) = 1\}
\end{bmatrix}
=
Q_i^{\Aoi_i(t)} \cdot
\begin{bmatrix}
\P\{X_i(t-\Aoi_i(t)) = 0\} \\
\P\{X_i(t-\Aoi_i(t)) = 1\}
\end{bmatrix}.
\end{align}

Expanding the probability expressions component-wise using the Markov transition matrix, we obtain the conditional probability formulation:
\begin{align}
\P\{X_i(t)=0\} = [Q_{i}^{\Aoi_i(t)}]_{00} \cdot \P\{\hatX_i(t) = 0\} \notag \\+ [Q_{i}^{\Aoi_i(t)}]_{01}  \cdot \P\{\hatX_i(t) = 1\}, \\
\P\{X_i(t)=1\} = [Q_{i}^{\Aoi_i(t)}]_{10} \cdot \P\{\hatX_i(t) = 0\} \notag \\+ [Q_{i}^{\Aoi_i(t)}]_{11}  \cdot \P\{\hatX_i(t) = 1\}.
\end{align} 
Since the transition matrix is symmetric, specifically $[Q_{i}^{\Aoi_i(t)}]_{01} = [Q_{i}^{\Aoi_i(t)}]_{10}$ and $[Q_{i}^{\Aoi_i(t)}]_{00} = [Q_{i}^{\Aoi_i(t)}]_{11}$. This completes proof. $\qed$

\section{Proof of Lemma \ref{lem:Bellman equations}: Bellman equations} \label{app:Bellman equations}
The Bellman equation recursion characterizes the optimal decision-making process by balancing immediate rewards and costs with long-term expected profit. It ensures that the optimal strategy at each step is derived by maximizing over all possible decisions while accounting for Markovian state transitions. Let $\lambda$ be the time-average reward under the optimal policy. The differential value function $S(\hatX(t),\Aoi(t))$ quantifies the relative value of being in state $(\hatX(t),\Aoi(t))$ compared to this time-average reward. Observe that if no transmission attempt is made ($u=0$), only the AoI $\Aoi$ increases by 1. In contrast, when transmission is attempted ($u=1$), the immediate reward depends on the PU occupancy $X(t)$: an immediate reward of +1 is received only if the channel is vacant ($X(t)=0$). Regardless of the PU occupancy, the effective transmission cost $D$ is always incurred. Therefore, using Lemma \ref{lem:MDP Evolution}, we obtain the following two recursive equations:
\begin{align}
\lambda + S(1,\Aoi) = \max\limits_{u \in \{0,1\}} \bigg\{& S(1,\Aoi+1), \big[Q^\Aoi\big]_{10}(S(0,1)+1)
\notag \\ + \big[Q^\Aoi\big]_{11}&S(1,1) - D \bigg\}, ~\forall \Aoi \in \mathbb{Z^+}.
\end{align}

\begin{align}
\lambda + S(0,\Delta) = \max\limits_{u \in \{0,1\}} \bigg\{& S(0,\Aoi+1), \big[Q^\Aoi\big]_{00}(S(0,1)+1) \notag \\+ \big[Q^\Aoi\big]_{01}&S(1,1) -D\bigg\}, ~\forall \Aoi \in \mathbb{Z^+}.
\end{align}
\qed

\section{Proof of Lemma \ref{lem:Average Reward λ}: time-average reward} \label{app:Average Reward λ}
Without loss of generality, we set $S(0,1) = S \in \mathbb{R}$, and $S(1,1) = 0$. Assume that the optimal policy follows a threshold structure, i.e., there exists a threshold $H$ such that it is optimal to access the channel ($u = 1$) for all states $\Aoi \geq H$ and to remain idle ($u= 0$), otherwise. Since our optimal state space is restricted, it suffices to examine the Bellman equation \eqref{eq:Bellman Case 1} for $\{(1, \Aoi) | \Aoi \geq 1\}$ to verify the optimality of the proposed threshold policy.

\textbf{Case 1)}: If $\Aoi \geq H$, then the action $u=1$ under the threshold policy and the Bellman equation \eqref{eq:Bellman Case 1} reduce to
\begin{align}
\begin{aligned}
S(1,\Delta) =  &\left[Q^{\Aoi}\right]_{10}\big(S(0,1)+1\big) \\&+ \left[Q^{\Aoi}\right]_{11} S(1,1) -D - \lambda, ~\forall \Delta \geq H.
\end{aligned}
\end{align}
Since $S(0,1) = S\in \mathbb{R} $ and $S(1,1) = 0$, we can simplify this to
\begin{align}  \label{eq:Aoi ≥ H}
S(1,\Aoi) = \left[Q^{\Aoi}\right]_{10}(S+1) -D - \lambda, ~\forall \Aoi \geq H.
\end{align}\\

In addition, we also need to show that $\left[Q^{\Aoi}\right]_{10}$ is a monotone increasing and concave function with respect to $\Aoi$, which implies that $S(1,\Aoi+1) \geq S(1,\Aoi) \text{ for all } \Aoi \geq H$.  Note that every symmetric transition matrix $Q$ with transition probability $q\in(0,1/2)$ can be orthogonally diagonalizable and 

\begin{align} \label{eq:power of Q}
    Q^{\Aoi} &= PD^{\Aoi}P^{-1} = \begin{bmatrix} \frac{1}{\sqrt{2}} & \frac{1}{\sqrt{2}} \\ \frac{1}{\sqrt{2}} & \frac{-1}{\sqrt{2}} \end{bmatrix} \begin{bmatrix} 1 & 0 \\ 0 & (1-2q)^{\Aoi} \end{bmatrix} \begin{bmatrix} \frac{1}{\sqrt{2}} & \frac{1}{\sqrt{2}} \\ \frac{1}{\sqrt{2}} & \frac{-1}{\sqrt{2}} \end{bmatrix}.
\end{align}

As a result, we observe that $\lim_{\Aoi \to \infty}[Q^{\Aoi}]_{01}=\lim_{\Aoi \to \infty}[Q^{\Aoi}]_{00}=1/2 ~\forall q \in (0,1/2)$. This reflects the intuition that as $\Aoi$ becomes large, the effect of the initial state diminishes and the system reaches a steady state, where the transition probability between two states converges to $1/2$. Such behavior aligns with the dynamics of symmetric systems, where transitions between states 0 and 1 tend to become equally likely over time, reflecting a state of maximum entropy. From \eqref{eq:power of Q}, we observe that  $[Q^\Aoi]_{01}=\frac{1-(1-2q)^\Aoi}{2}$. This function is monotonically increasing with respsec to $\Aoi$, as its the first derivative is given by $\frac{d[Q^\Aoi]_{01}}{d\Aoi}=-\frac{1}{2}(1-2q)^{\Aoi}\cdot\ln(1-2q)$ is always positive for all $q \in (0,1/2)$. To establish concavity, we consider the second derivative $\frac{d^2[Q^\Aoi]_{01}}{d^2\Aoi}=-\frac{1}{2}(1-2q)^{\Aoi}\cdot[\ln(1-2q)]^2$, which is always negative for all $q\in(0,1/2)$.

Therefore, our value function $S(1, \Aoi)$ continues to increase until AoI $\Aoi$ reaches the threshold H, reflecting the intuitive idea that once the channel is already occupied, waiting until AoI reaches the threshold is beneficial for maximizing the reward.

\textbf{Case 2)}: If $\Aoi < H$, then the action $u=0$
under the threshold policy and the Bellman equation \eqref{eq:Bellman Case 1}  reduce to
\begin{align} \label{eq:Aoi < H}
S(1,\Delta) =  S(1,\Delta+1) - \lambda ,~\forall \Delta < H.
\end{align}
Using the induction, for a general $k$, we obtain
\begin{align}
\begin{aligned}
S(1,H-k) &= S(1,H)-k\lambda \\
        &= \Big(\left[Q^{H}\right]_{10}(S+1) -D \Big) - (k+1)\lambda. 
\end{aligned}
\end{align}\\
Setting $k = H-1$, we derive
\begin{align}
S(1,1) &= \left[Q^{H}\right]_{10}(S+1) -D - H\lambda.
\end{align}\\
Observe that since we assume S(1,1) = 0, it follows that
\begin{align} \label{eq:S(1,Aoi)}
H\lambda = \left[Q^{H}\right]_{10}(S+1) -D.
\end{align}

Now, referring the Bellman equation \eqref{eq:Bellman Case 2} for the state $(0,\Aoi)=(0, 1)$, and using $\left[ Q \right]_{10}=1-\left[ Q \right]_{00}$, we get
\begin{align}
S= \frac{(1-\left[ Q \right]_{10}) -D -\lambda}{\left[ Q \right]_{10}}. \label{eq:S(0,Aoi)}
\end{align} 
Substituting \eqref{eq:S(0,Aoi)} to \eqref{eq:S(1,Aoi)} and rearranging gives
\begin{align}
\Rightarrow \lambda = \frac{[Q^{H}]_{10}-([Q^{H}]_{10}+[Q]_{10})D}{[Q^{H}]_{10}+H[Q]_{10}},
\end{align}
which completes proof. $\qed$

\section{Proof of Theorem \ref{thm:Optimal threshold H}: optimal threshold} \label{app:Optimal threshold H}


Since the threshold $H$ is defined over discrete time space $\mathbb{Z^+}$, directly finding the optimal value $H^*$ that maximizes the time-average reward $\lambda(H,D)$ is inherently difficult. Even when attempting to approximate the optimal threshold $H^*$ by relaxing $H$ to the continuous domain and differentiating $\lambda(H,D)$ with respect to $H$, the resulting expressions do not admit a closed-form solution due to the involvement of the Lambert $W$-function. Therefore, instead of deriving a closed-form expression for $H$, we determine a necessary condition for $H$ by formulating an inequality for the time-average reward $\lambda$. This result is also used to establish indexability and derive the Whittle Index.  

Recall that the Bellman equation \eqref{eq:Bellman Case 1} for the states $\{(1, \Aoi) | \Aoi \geq 1\}$ is given by 
\begin{align}
S(1,\Aoi) = \max_{u \in\{0,1\}} \{S(1,\Aoi+1), \; [Q^\Aoi]_{10}(S+1)-D \} - \lambda.
\end{align}

Assume that there exists an optimal threshold $H^*$. Then, for the Bellman equation to satisfy the threshold structure, the following conditions must hold to maximize value function $S(\cdot)$.
\begin{equation}
\begin{cases}
\forall \Aoi < H^*,  \; S(1,\Aoi+1) \geq [Q^\Aoi]_{10}(S+1) - D\\
\forall \Aoi \geq H^*, \; S(1,\Aoi+1) \leq [Q^\Aoi]_{10}(S+1) - D.
\end{cases}
\end{equation}

Substituting $\Aoi=H^*-1$ and $\Aoi=H^*$ into these conditions, we obtain

\begin{equation}
\begin{cases}
S(1, H^*) \geq [Q^{H^*-1}]_{10}(S+1) - D\\
S(1, H^*+1) \leq [Q^{H^*}]_{10}(S+1) - D.
\end{cases}
\end{equation}

From \eqref{eq:Aoi ≥ H}, we can simplify this to 

\begin{equation}
\begin{cases}
\big([Q^{H^*}]_{10}-[Q^{H^*-1}]_{10}\big)(S+1) \geq \lambda \\
\big([Q^{H^*+1}]_{10}-[Q^{H^*}]_{10}\big)(S+1) \leq \lambda.
\end{cases}
\end{equation}

Let $g(H):=\big([Q^{H+1}]_{10}-[Q^{H}]_{10}\big)(S+1)$. Thus, we obtain
\begin{equation} \label{eq:Necessary Condition of Optimal Threshold}
g(H^*) \leq \lambda \leq g(H^*-1).
\end{equation}

The final step is to verify the monotonicity of $g(H)$ for the feasibility of \eqref{eq:Necessary Condition of Optimal Threshold}. Given that $[Q^H]_{10}$ is a monotone increasing and concave function on $H$, $[Q^{H+1}]_{10}-[Q^H]_{10}$ becomes a monotone decreasing function. To ensure that $g(H)$ is decreasing, it is sufficient that $S(1,0)=S>-1$. This condition holds necessarily because if $S\leq-1$, the Bellman equation would always prefer idling (u=0), thereby breaking the threshold type structure. Moreover, as $S(0,1)$ represents the value function that is typically non-negative, this assumption is both reasonable and naturally satisfied in practice. Therefore, the optimal threshold $H^*$, obtained from the Bellman equation, guarantees the maximization of the time-average reward. This completes the proof. $\qed$

\section{Proof of Lemma \ref{lem:Indexability} and Theorem \ref{thm:Whittle Index}: indexability} \label{app:Indexability}

The indexability property for the decoupled problem requires that, as the transmission cost $D$ increases from 0 to $\infty$, the set of AoI values for which it is optimal to transmit, i.e. $\mathscr{P}(D) = \{\Aoi \in \mathbb{Z^+}\ \vert \; \Aoi < H^*(D) \}$, must increase monotonically from the empty set $\phi$ (never transmit) to the entire set $\mathbb{Z^+}$ (always transmit). Here, the optimal threshold $H^*(D)$ should increase accordingly as effective transmission cost $D$ increases.

The first step is to derive an inequality related to the transmission cost $D$ from Theorem \ref{thm:Optimal threshold H}. Substitute \eqref{eq:S(0,Aoi)} to Theorem \ref{thm:Optimal threshold H}, and rearranging for $\lambda$, we obtain

\begin{equation}
\begin{cases}
(1+\frac{[Q^{H^*+1}]_{10}-[Q^{H^*}]_{10}}{[Q]_{10}}) \cdot \lambda \geq \frac{[Q^{H^*+1}]_{10}-[Q^{H^*}]_{10}}{[Q]_{10}}(1-D) \\
(1+\frac{[Q^{H^*}]_{10}-[Q^{H^*-1}]_{10}}{[Q]_{10}}) \cdot \lambda  \leq \frac{[Q^{H^*}]_{10}-[Q^{H^*-1}]_{10}}{[Q]_{10}}(1-D).
\end{cases}        
\end{equation}

Since $\lambda$ was obtained in Lemma \ref{lem:Average Reward λ} as a closed-form expression of the effective transmission cost $D$ and the optimal threshold $H^*$, substituting $\lambda(H^*,D)$ into the second inequality yields the upper bound of $C$ given by 

\begin{align}
       & D\leq W(1,H^*) \notag \\ &\triangleq \frac{H^*([Q^{H^*-1}]_{10}-[Q^{H^*}]_{10})+[Q^{H^*}]_{10}}{(H^*-1)([Q^{H^*-1}]_{10}-[Q^{H^*}]_{10})+[Q^{H^*}]_{10}+[Q]_{10}}. \label{eq:38}
\end{align}

The numerator can be rewritten as $N(H^*)=[Q^{H^*}]_{10}-H^*([Q^{H^*}]_{10}-[Q^{H^*-1}]_{10})$. Since $[Q^{H^*}]_{10}$ is concave and increasing in $H^*$, its incremental difference $[Q^{H^*}]_{10}-[Q^{H^*-1}]_{10}$ decreases as $H^*$ increases. Although the coefficient $H^*$ increases linearly, the decreasing nature of the difference term due to concavity ensures that the product $H^*([Q^{H^*}]_{10}-[Q^{H^*-1}]_{10})$ increases slowly or diminishes. As a result, the overall expression for $N(H^*)$ increases with $H^*$, because the increases in $[Q^{H^*}]_{10}$ outweighs the growth in the subtracted term.

Then Equation \eqref{eq:38} can be rewritten as 
\begin{align}
W(1,H^*) 
= \frac{1}{1+\frac{[Q^{H^*}]_{10}-[Q^{H^*-1}]_{10}+[Q]_{10}}{N(H^*)}}.
\end{align}

The term $[Q^{H^*}]_{10}-[Q^{H^*-1}]_{10}+[Q]_{10}$ is decreasing because the incremental difference is decreasing by the concavity of $[Q^{H^*}]_{10}$, and $[Q]_{10}$ is constant. As a result, the overall fraction inside the denominator decreases with $H^*$. Therefore, the entire expression $W(1,H^*)$, which is the reciprocal of a decreasing function, becomes monotone increasing. 


Now, considering the monotonicity of $W(1, H^*)$ on $H^*$, let us prove indexability.

\textbf{Case 1)} If $D=0$, then $\lambda(H,0)=\frac{[Q^H]_{10}}{[Q^H]_{10}+H[Q]_{10}}$. Since the denominator is always greater than the numerator and their difference is proportional to threshold $H$, $\lambda(H)$ decreases monotonically as AoI increases.  Then, $H^{*}(0) = \argmax_{H} \; \lambda(H,0) = 1$, which is the smallest element in the natural numbers, so $\mathscr{P}(D=0)=\{\Aoi\in \mathbb{Z^+} \vert \Aoi < 1\}=\phi$. Since there is no transmission cost for sensing, always transmitting is the optimal threshold policy for maximizing the time-average throughput reward, which is an intuitive result.

\textbf{Case 2)} If $D>0$, we will use the monotonicity of the upper bound $W(H)$.
By the definition, $W(1,1)=0$, thus $W(1,1)<D$. Now, if there exits some $H>1$ such that $W(1,H) \geq D$, then we know that there also exists some optimal threshold $H^*(D)$ such that $W(1,H^*-1)\leq D \leq W(1,H^*)$ using monotonicity of $W(1, H^*)$ with respect to $H^*$. Using this, we can relate the optimal threshold $H^*$ to transmission cost $D$. Let $D$ be such that it lies in the interval $[W(1, H^*-1),\; W(1, H^*)]$, then the optimal policy is of threshold type with the threshold at $H^*$. Observe that if $W(1, H^*)$ is strictly increasing then there can only be one such interval in which $D$ can lie. Therefore, we conclude that an increase in the transmission cost $D$ results in an increase in the optimal threshold $H^*(D)$, leading to an expansion of $\mathscr{P}(D)=\{\Aoi \in \mathbb{Z^+} \vert \Aoi < H^*(D)\}$. When $W(1, \Aoi)<D,~\forall \Aoi \in \mathbb{Z^+}$, we choose $H^*=\infty$ and $\mathscr{P}(D)=\{\Aoi \in \mathbb{Z^+} \vert \Aoi < \infty\}=\mathbb{Z^+}$. This completes the proof of indexability for the decoupled problem.

As the final step, since indexability is guaranteed, the Whittle index can be properly derived. From the definition of the Whittle index, the cost that makes both actions equally desirable for a given state corresponds to what we have already established. We have established that $W(1, \Aoi)$ satisfies all the required condition from the Bellman equation through the proof of indexability. Furthermore, $W(0,\Aoi)$ is defined as $\infty$ due to the fact that transmission always optimal when $\hatX=0$, meaning no finite cost can make both actions equally desirable. Therefore, it follows that $W(\hatX, \Aoi)$ is the Whittle index. $\qed$

\section{Proof of Lemma \ref{lem:Approximated Whittle index}: Heuristic Index} \label{app:Approximated Whittle index}

Let $G_i(t)$ represent the number of packets that the SU can transmit on channel $i$ starting at time-slot $t$ until it faces its first collision.
Then its expectation is given by
\begin{align}
&\E[G_i(t)] =\P\{\text{channel $i$ is free at time-slot $t$} ~|~ \text{AoI } \Aoi_i\} \notag \\
& \times \E[\text{free duration of channel $i$ until next collision}].
\end{align}

From Assumption \ref{assm: Markov transition prob.}, the PU channel occupancy follows a two-state discrete-time Markov chain. When the channel is in the free state, it remains in this state with probability $1-q_i$, and transitions to the occupied state with probability $q_i$ at each time-slot. Hence, the free duration of channel $i$ until next collision, denoted $T_i^{\text{free}}$ follows a Geometric distribution with parameter $q_i$ as
\begin{align}
\P\{T_i^{\text{free}} = k\} = (1-q_i)^{k-1}q_i, ~\forall k \in \mathbb{Z^+}.
\end{align}
Accordingly, the expected free duration is given by
\begin{align}
\E[T_i^{\text{free}}] = q_i\sum_{k=1}^{\infty}k(1-q_i)^{k-1} = \frac{1}{q_i} ~\because 0<q_i<1.
\end{align}

From Lemma \ref{lem:MDP Evolution}, the conditional probability that channel $i$ is free at time-slot $t$, given that it was last observed to be occupied $\Aoi$-slot ago, is
\begin{align}
\P\{X_i(t)=0~|~X_i(t-\Aoi_i)=1\}=[Q_i^{\Aoi_i}]_{10}.
\end{align}

Therefore, the expected number of packets that the SU can transmit before a collision is 
\begin{align}
\E[G_i(t)] = [Q_i^{\Aoi_i}]_{10}\cdot\frac{1}{q_i} = \frac{[Q_i^{\Aoi_i}]_{10}}{[Q_i]_{10}}.
\end{align}
\qed

\section{Developing the Correlated Heuristic Index} \label{app:Correlated index}

We consider a correlated channel environment where a PU continuously occupies a fixed block of contiguous $B$ channels, and the entire block shifts over time, such that the center location of the block follows a Gaussian random walk. We discretize and bound this random walk to ensure that the centre location of the band corresponds to an integer representing a channel from 1 to $N$. This setting captures practical systems where primary users, such as wideband communication systems, typically require contiguous spectrum blocks for stable operation.

Let $w(t)\sim N(0,\sigma^2)$ denote the Gaussian random step at time $t$, where $\sigma$ represents the standard deviation controlling the step size. Then, the center of the PU-occupied block, denoted by $P(t)$, evolves according to
\begin{align}
     P(t+1) = P(t) + w(t), \text{ where } w(t) \sim N(0,\sigma^2).
\end{align}


Since the SU only observes a subset $[Selected] \subset [N]$ consisting of $L$ channels at each time-slot, we do not have perfect knowledge of $P(t)$, and instead estimate it using a simple update rule:
\begin{align}
\begin{aligned}
\hatP(t+1)= \alpha\hatP(t) + (1-\alpha)\E[i~|~i \in [Selected], X_i(t)=1],
\end{aligned}
\end{align}
where $\alpha \in [0,1]$ is a parameter that controls the memory of the estimate, and $[Selected]$ denotes the observed subset of size $L$. 

Since $\hatP(t)$ is computed based on occupancy observations that are stale, so we can use their AoI values $\Aoi_i$ to approximately measure the uncertainty in the PU's band location. We do so using the average of AoIs, i.e. $\bar\Aoi=\E[\Aoi_i~|~i \in [Selected]]$. Consequently, we model the true center position of the PU band as a Gaussian random variable with mean $\hatP(t)$ with variance $\bar\Aoi$, i.e., $P(t)\sim\mathcal{N}\big(\hatP(t),\bar\Aoi\big)$.

The heuristic index $\E[G_i(t)|\text{PU occupancy history}]$ for this correlated scenario requires first estimating the probability that channel $i$ is free at time-slot $t$ given the PU occupancy history. The PU occupancy history is captured through the estimated PU center position $\hatP(t)$, which is inferred from a subset of observed channel states over time. This estimate serves as a sufficient statistic summarizing past PU activities, i.e. $V_i(\text{PU occupancy history}) = V_i(\hatP(t))$.

First, we evaluate the probability that channel $i$ is free at time-slot $t$, which corresponds to the event that channel $i$ lies outside the PU's occupied region $[P(t)-B/2, P(t)+B/2]$. This is given by
\begin{align} \label{eq:Heuristic index-Probability}
\P\{&\text{channel } i \text{ is free at time-slot }t\} \notag\\
&= \P\{ i \notin [P(t)-B/2, P(t)+B/2]\} \notag\\
&=1- \P\{P(t) \in [i-B/2, i+B/2] \} \notag\\
&=1 - \Phi(\frac{i+B/2-\hatP(t)}{\sqrt{\bar\Aoi}}) + \Phi(\frac{i-B/2-\hatP(t)}{\sqrt{\bar\Aoi}}). 
\end{align}

Next, we approximate the expected free duration using the concept of mean first-passage time (MFPT) in a Brownian motion process in an interval \cite{redner2001guide}, which is known to scale quadratically with the distance between the current position and destination. We therefore model the free duration as
\begin{align} \label{eq:Heuristic index-Free duration}
    \E\{\text{channel $i$ free duration until next collision}\} \propto |i-\hatP(t)|^2.
\end{align}

Combining the above two quantities \eqref{eq:Heuristic index-Probability} and \eqref{eq:Heuristic index-Free duration}, the heuristic index for the correlated scenario can be approximated as follows
\begin{align}
&V_i(\text{PU occupancy history})= \notag \\
&\begin{cases}
\infty, \text{ if } \hatX_i = 0\\
\Big[ 1 - \Phi(\frac{i+B/2-\hatP(t)}{\sqrt{\bar\Aoi}}) + \Phi(\frac{i-B/2-\hatP(t)}{\sqrt{\bar\Aoi}})\Big]|i-\hatP(t)|^2, \text{ o.w. .} 
\end{cases}
\end{align}
\qed

Note that we have made multiple approximating assumptions to arrive at this expression. Despite this, the heuristic index significantly outperforms all other benchmarks in simulations. This suggests that even approximate (and possibly data-driven/learning-based) index functions could be a good-proxy for making spectrum access decisions for complex scenarios.

\end{document}